\newcommand{\ourtitle}{Towards Bounded Infeasible Code Detection}
\definecolor{darkred}{rgb}{0.5,0,0}
\definecolor{darkblue}{rgb}{0,0,0.5}
\definecolor{darkgreen}{rgb}{0,0.5,0}
\begin{document}

\title{\ourtitle}
%\frontmatter          % for the preliminaries

\def\arlt{\protect\href{http://swt.informatik.uni-freiburg.de/~arlt}{Stephan
Arlt}}

\def\christ{\protect\href{http://swt.informatik.uni-freiburg.de/~christj}{J\"{u}rgen
Christ}}

\def\schaef{\protect\href{http://iist.unu.edu/people/schaef}{Martin 
  Sch\"af}}
\def\hoenicke{\protect\href{http://swt.informatik.uni-freiburg.de/~hoenicke}{Jochen Hoenicke}}

\author{\christ\inst{1}, \hoenicke\inst{1}, and
\schaef\inst{2}}

\institute{Albert-Ludwig University, Freiburg
\and
United Nations University, IIST, Macau
  }

\maketitle
\begin{abstract}

A first step towards more reliable software is to execute each statement and
each control-flow path in a method once. In this paper, we present a formal
method to automatically compute test cases for this purpose based on the idea of
a bounded infeasible code detection.
The method first unwinds all loops in a program finitely often and then encodes
all feasible executions of the loop-free programs in a logical formula. Helper
variables are introduced such that a theorem prover can reconstruct the
control-flow path of a feasible execution from a satisfying valuation of this
formula. Based on this formula, we present one algorithm that computes a
feasible path cover and one algorithm that computes a feasible statement cover.
We show that the algorithms are complete for loop-free programs and that they
can be implemented efficiently.
We further provide a sound algorithm to compute procedure summaries which makes
the method scalable to larger programs.
\end{abstract}

\section{Introduction}\label{sec:introduction}

Using static analysis to find feasible executions of a program that pass a
particular subset of program statements is an interesting problem. Even
though in general not decidable, there is ongoing research effort to develop
algorithms and tools that are able to solve this problems for a reasonable large
number of cases. Such tools can be used, e.g., to automatically generate test
cases that cover large portions of a programs source code and trigger rare
behavior, or to identify program fragments for which no suitable test case can
be found. The later case sometimes is referred to as \emph{infeasible code
detection}~\cite{vstte12}. Code is considered to be infeasible if no terminating
execution can be found for it. Infeasible code can be seen as a superset of
unreachable code as there might be executions reaching a piece of infeasible
code which, however, fail during their later execution.

In particular, a counterexample for the infeasibility of a piece of code is a
terminating execution that executes this code. That is, finding a set of test
cases that cover all statements in a program is equivalent to proving the
absence of infeasible code. Existing approaches to detect infeasible code do not
yet exploit the fact that counterexamples for infeasibility might constitute
feasible test cases.

In this paper, we discuss a bounded approach towards infeasible code detection
that generates test cases that cover all statements which have feasible
executions within a given (bounded) number of loop unwindings.
The interesting aspect of bounded infeasible code detection over existing
(unbounded) approaches is that counterexamples for infeasibility are likely to
represent actual executions of the program, as compared to the unbounded case,
where these counterexamples might be introduced by the
necessary over-approximation of the feasible executions.

The paper proposes two novel ideas: the concept of \emph{reachability
verification condition}, which is a formula representation of the program which,
similar the weakest-liberal precondition or strongest postcondition,
models all feasible executions of a program. But in contrast to existing
concepts, a satisfying assignment to the reachability verification condition
can directly be mapped to an execution of the program from source to sink. For
example, a valuation of \emph{wlp} can represent a feasible execution starting
from any point in a program, but this does not yet imply that this point is
actually reachable from the initial states of the program. Certainly there are
ways to encode the desired property using \emph{wlp}, or \emph{sp} by adding
helper variables to the program (see, e.g.,~\cite{doomedjournal,vstte12}),
however, we claim that the proposed reachability verification condition provides
a better formal basis to show the absence of infeasible code, as it, e.g., can
make better use of the theorem prover stack which results in a more efficient
and scalable solution. We suggest two algorithms to compute feasible executions
of a program based on the reachability verification condition. One uses
so-called \emph{blocking clauses} to prevent the theorem prover from exercising
the same path twice, the other algorithm uses \emph{enabling clauses} to urge
the theorem prover to consider a solution that passes program fragments that have
not been accessed before. Both algorithms return a set of feasible executions in
the bounded program. Further, both algorithms guarantee that any statement not
executed by these test cases is infeasible within the given bounds. We do a
preliminary evaluation of our algorithms against existing algorithms to detect
infeasible code.

Based on the reachability verification condition, as a second novelty, we
propose a technique to compute procedure summaries for bounded infeasible code
detection. As the presented algorithms return a set of feasible executions, we
can extract pairs of input and output values for each execution to construct
procedure summaries. The summaries are a strict under-approximation of the
possible executions of the summarized procedure. Therefore, the computed
summaries are sound to show the presence of feasible executions, but unsound to
show their absence. To overcome this gap, we suggest an on-demand computation of
summaries if no feasible execution can be found with the given summary. Within
the scope of this paper, we do not evaluate the concept of summaries as more
implementation effort is required until viable results can be presented.

In Section~\ref{sec:loopunwinding} we explain how we address the problem of
computing the weakest-liberal preconditions for general programs. In
Section~\ref{sec:vc} we show how a feasible execution that visits certain blocks
can efficiently be expressed as a formula and introduce the concept of
reachability verification condition. In Section~\ref{sec:algorithm} we present
two different algorithms to address the problem of generating test cases with
optimal coverage. In Section~\ref{sec:procsum} we show how procedure summaries
can be computed with our test case generation algorithm. We present an
experimental evaluation of our algorithms in Section~\ref{sec:experiments}.

\section{Preliminaries}\label{sec:preliminaries}
%% \todo{Wir muessen noch irgendwo exponiert schreiben was unser ziel ist
%% (feasible executions finden und was soundness in diesem zusammenhang ist.)} 

For simplicity, we consider only simple
unstructured programs written in the language given in Figure~\ref{fig:language}.
\begin{figure}[htdp]
\begin{center}
\begin{align*}
  \textit{Program} ::= & \; \textit{Procedure}^+ \\
  \textit{Procedure} ::= & \; \textbf{proc} \, \textit{ProcName}
  \textbf{(}\textit{VarId}^*\textbf{)} \, [ \, \textbf{returns} \, \textit{VarId} \, ] \, \textbf{\{} \,
    \textit{Block}^+ \, \textbf{\}} \\
  \textit{Block}  ::= & \; \textit{label} : \; \textit{Stmt}^* \,
  [ \, \textbf{goto} \,  \textit{label}^+\textbf{;} \, ]\\ \textit{Stmt} ::= & \; \textit{VarId}
  := \textit{Expr}\textbf{;} \; | \;
  \textbf{assume} \, \textit{Expr}\textbf{;} \\
  &  \mid  \textit{VarId} := \textbf{call}\,
  \textit{ProcName}\textbf{(}\textit{Expr}^*\textbf{)}\textbf{;}
 \end{align*}
\end{center}
\caption{Simple (unstructured) Language \label{fig:language}}
\end{figure}

Expressions are sorted first order logic terms of appropriate sort.
The expression after an \textbf{assume} statement have
Boolean sort.
A program is given by a set of \textit{Procedures} each with a unique name.
The special procedure named ``main'' is the entry point of a program.
Every procedure contains at least one block of code.
A block consists of a
label, a (possibly empty) sequence of statements, and non-deterministic
$\goto$ statement that lists transitions to successor blocks. 
The $\goto$ statement is omitted for the blocks that have no successors.
A statement can either be an assignment of a term to a
variable, an assumption, or a procedure call.
A call to a procedure is indicated by the \textbf{call} keyword followed by
the name of the procedure to call, and the (possible empty) list of arguments.
A procedure can return a value by writing into the variable mentioned in the
\textbf{returns} declaration.
If this declaration is omitted, the procedure cannot return a value.
\begin{figure}[htbp]
\begin{verbatim}
proc foo(x, y) returns z {
  l0: 
      goto l1, l2;
  l1: assume y > 0;
      z := x + y;
      goto l3;
  l2: assume y <= 0;
      z := x - y;
      goto l3;
  l3:
}

proc main() {
  l0: r := call foo(0, 1);
}
\end{verbatim}
  \caption{\label{fig:languageexample}Example of our Simple Language}
\end{figure}
If the conditional of an assumption evaluates to $\false$, the
execution blocks.
Figure~\ref{fig:languageexample} shows a small example of our simple language.

We assume that every procedure contains a unique \emph{initial block} $Block_0$
and a unique \emph{final block} that has no successor.  
A procedure \emph{terminates} if it reaches the
end of the final block.  A program \emph{terminates} if the
``main'' procedure terminates. We further assume
the directed graph which is given by the transitions between the
blocks is reducible.

The presented language is simple but yet expressive enough to encode high level
programming languages such as \texttt{C}~\cite{Cohen:2009:VPS:1616077.1616080}.
In this paper we do not address the problems that can arise during this
translation and refer to related work instead.

The weakest-liberal precondition~\cite{nla.cat-vn2681671,barnett2005} semantics
of our language is defined in the standard way:
\begin{center}
\begin{tabular}{ c | c }
  $\mathit{\st}$ &   $\wlp ( \mathit{\st}, Q)$ \\
  \hline
  $\textbf{assume}\;  E $ & $E \implies Q$ \\
  $ \textit{VarId} := \textit{Expr} $ & $Q[\textit{VarId}/ \textit{Expr}]$ \\    
  $S;T$ & $\wlp(S,\wlp(T,Q))$ \\
 \end{tabular}
\end{center}
A sequence of statements $\st$ in our language has a \emph{feasible execution} if and only if
there exists an initial valuation $\val$ of the program variables, such that in the execution of
$\st$ all \textbf{assume} statements are satisfied.
\begin{theorem}
A sequence of statements $\st$ has a feasible execution if and only if there exists a
valuation $\val$ of the program variables, such that $\val \not\models
\wlp(\st, \false)$.
\end{theorem}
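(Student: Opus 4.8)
The plan is to prove both directions at once by structural induction on $\st$, after pinning down what ``execution'' means operationally. I take the execution of $\st$ from an initial valuation $\val$ to proceed in the obvious straight-line manner: an assignment updates the valuation, an $\textbf{assume}\;E$ passes when $E$ holds in the current valuation and otherwise \emph{blocks}, and a composition executes its parts in order, aborting as soon as a part blocks. Under this reading the paper's notion ``$\st$ has a feasible execution'' is precisely: there is a valuation $\val$ whose execution of $\st$ never blocks, equivalently one that runs to the end of $\st$.

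The technical core is a claim stronger than the theorem, quantified over an arbitrary postcondition: for every $Q$ and every valuation $\val$, one has $\val \models \wlp(\st, Q)$ if and only if the execution of $\st$ from $\val$ either blocks or terminates in some valuation $\val'$ with $\val' \models Q$. I would establish this by induction on the number of statements in $\st$, reading off each case from the $\wlp$ table: for the empty sequence $\wlp$ returns $Q$ and the execution terminates at $\val$ itself; for $\textbf{assume}\;E$ we have $\wlp = E \implies Q$ and the execution blocks exactly when $\val \not\models E$ and otherwise ends at $\val$; for $x := e$ we have $\wlp = Q[x/e]$, the execution never blocks and ends at the updated valuation $\val[x\mapsto\val(e)]$, and a substitution lemma gives $\val \models Q[x/e] \iff \val[x\mapsto\val(e)] \models Q$; and for $S;T$ we unfold $\wlp(S;T,Q)=\wlp(S,\wlp(T,Q))$, apply the induction hypothesis first to $S$ with postcondition $\wlp(T,Q)$ and then to $T$ with postcondition $Q$, and observe that ``the execution of $S$ blocks, or it reaches $\val''$ from which the execution of $T$ blocks or ends in $\val' \models Q$'' is word for word the condition that the execution of $S;T$ blocks or ends in $\val' \models Q$.

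The theorem then follows by instantiating the claim with $Q = \false$. No valuation satisfies $\false$, so the second disjunct is vacuous and the claim reduces to $\val \models \wlp(\st,\false)$ iff the execution of $\st$ from $\val$ blocks. Taking contrapositives, $\val \not\models \wlp(\st,\false)$ iff this execution does not block, i.e.\ iff every $\textbf{assume}$ statement encountered along it is satisfied; quantifying existentially over $\val$ yields exactly ``$\st$ has a feasible execution iff some $\val$ satisfies $\val \not\models \wlp(\st,\false)$''.

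The one genuinely delicate point is the assignment case, where the equivalence hinges on the substitution lemma linking the syntactic substitution $Q[x/e]$ to the semantic update of the valuation; if this lemma is not already available it should be stated and proved (itself a routine induction on the term structure of $Q$). A second thing worth flagging is that the composition step invokes the induction hypothesis at the postcondition $\wlp(T,Q)$, not merely at $\false$, which is the reason one inducts on the $Q$-parametric claim rather than directly on the theorem statement. Everything else is bookkeeping.
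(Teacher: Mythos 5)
Your proof is correct: the paper states this theorem without proof, treating it as a standard property of the weakest-liberal precondition, and your argument --- generalizing to an arbitrary postcondition $Q$, inducting on the structure of $\st$ against the straight-line operational semantics, and then instantiating $Q = \false$ so that ``blocks'' becomes the only way to satisfy $\wlp(\st,\false)$ --- is exactly the standard argument the paper implicitly relies on. The two points you flag (the substitution lemma for the assignment case, and the need to strengthen the induction hypothesis to a $Q$-parametric claim for the sequencing case) are indeed the only nontrivial steps, and you handle both correctly.
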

Hence, the initial state of a feasible execution of $\st$ can be derived
from a counterexample to the formula representation of the weak-liberal
precondition $\wlp(\st, \false)$.

A path in a program is a sequence of blocks $\pi = \textit{Block}_0
\ldots \textit{Block}_n$ such that there is a transition from any
$\textit{Block}_i$ to $\textit{Block}_{i+1}$ for $0\leq i<n$.
We extend the definition of feasible executions from statements to paths 
by concatenating the statements of each block.
We say that a path
$\pi$ is a \emph{complete path} if it starts in the initial block and ends in 
the final block. In the following, we always refer to complete paths unless
explicitly stated differently. A path is \emph{feasible}, if there exists a
feasible execution for that path.
\begin{theorem}
Given a path $\pi = \textit{Block}_0 \ldots \textit{Block}_n $ in a program
$\prg$ where $\st_i$ represents the statements of $\textit{Block}_i$. The path
$\pi$ is called feasible, if and only if there exists a valuation $\val$ of the
program variables, such that $\val \not\models \wlp(\st_0; \ldots; \st_n, \false)$.
\end{theorem}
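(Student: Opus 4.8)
The plan is to derive this statement as an immediate corollary of the preceding theorem on statement sequences together with the definition of path feasibility. Recall that feasibility of a path $\pi = \textit{Block}_0 \ldots \textit{Block}_n$ was \emph{defined} by extending the notion of feasible execution from statements to paths via concatenation of the statements of each block; concretely, $\pi$ is feasible iff the statement sequence $\st_0; \ldots; \st_n$ has a feasible execution. So the only thing that needs checking is that this reduction to a single statement sequence is legitimate, and then the first theorem closes the argument.

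First I would make the reduction precise. The $\goto$ at the end of a block is non-deterministic and contains no \textbf{assume}, so it imposes no constraint on the initial valuation; its only role is to pick a successor block, and along a path that choice is already fixed by the requirement that there is a transition from $\textit{Block}_i$ to $\textit{Block}_{i+1}$. Likewise the block labels carry no semantic content. Hence an execution of $\prg$ that follows $\pi$ is exactly an execution of the concatenated sequence $\st_0; \ldots; \st_n$, and it is feasible (all \textbf{assume} statements satisfied) in the one reading iff it is in the other. This is precisely the content of the phrase ``we extend the definition of feasible executions from statements to paths by concatenating the statements of each block.''

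Next I would invoke the first theorem with $\st := \st_0; \ldots; \st_n$: this sequence has a feasible execution iff there exists a valuation $\val$ of the program variables with $\val \not\models \wlp(\st, \false)$. Since $\wlp(\st, \false) = \wlp(\st_0; \ldots; \st_n, \false)$, combining this with the reduction above gives the claim in both directions. A minor point to dispatch in passing is that $\wlp$ of a sequence does not depend on the bracketing of ``;'', i.e. $\wlp((S;T);U, Q)$ and $\wlp(S;(T;U), Q)$ agree; this follows by a one-line induction from the defining equation $\wlp(S;T,Q) = \wlp(S,\wlp(T,Q))$, so the notation $\wlp(\st_0; \ldots; \st_n, \false)$ is unambiguous.

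The main — indeed essentially the only — obstacle is the bookkeeping in the reduction step: being careful that nothing in the block structure (labels, the trailing $\goto$, the empty-statement case for the final block) contributes to or weakens the feasibility condition, so that path feasibility and statement-sequence feasibility genuinely coincide. Once that is spelled out, the theorem is just the first theorem applied to the concatenation $\st_0; \ldots; \st_n$.
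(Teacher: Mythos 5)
Your proposal is correct and matches the paper's treatment: the paper gives no explicit proof, treating the statement as an immediate consequence of defining path feasibility via concatenation of the blocks' statements and then applying the preceding theorem to the sequence $\st_0;\ldots;\st_n$, which is exactly your argument. The bookkeeping you add (the trailing \textbf{goto} and labels impose no constraints, and $\wlp$ of a sequence is independent of bracketing) is a sound and welcome elaboration of what the paper leaves implicit.
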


Note that our simple language does \emph{not} support assertions. For the
weakest liberal precondition, assertions are treated in the same way as
assumptions. That is, we might render a path infeasible because it's execution
fails, but still this path might be executable. As our goal is to execute all
possible control-flow paths, we encode assertions as conditional choice. This
allows us later on to check if there exist test cases that violate an assertion.

%% Hence, to achieve our goal of finding a set of test cases that cover all
%% (feasible) statements we must find a set of feasible executions that cover all
%% possible statements. This raises three problems:
%% \begin{itemize}
%%   \item The weakest-liberal precondition cannot be computed for arbitrary
%%   programs (with loops). 
%%   \item An efficient way to query if a statement occurs on a feasible path is
%%   needed.
%%   \item An algorithm that covers all blocks efficiently is needed.
%% \end{itemize}

\section{Program Transformation}\label{sec:loopunwinding}

As the weakest-liberal precondition cannot be computed for programs with loops
in the general case, an abstraction is needed. Depending on the purpose of the
analysis, different information about the possible executions of the program
has to be preserved to retain \emph{soundness}. E.g., when proving partial
correctness~\cite{Barnett06boogie,barnett2005} of a program, the set of all
executions that fail has to be preserved (or might be
over-approximated), while terminating or blocking executions might be
omitted or added.

For our purpose of identifying a set of executions containing all feasible
statements, such an abstraction, which over-approximates the executions of
a program is not suitable as we might report executions which do not exist in
the original program. Instead we need a loop unwinding which does not add any
(feasible) executions. 

\paragraph{Loop Unwinding.}
Our loop unwinding technique is sketched in Figure~\ref{fig:unwinding}. As we
assume (w.l.o.g) that the control-flow graph of our input program is reducible,
we can identify one unique entry point for each loop, the \emph{loop header}
$B_h$, and a \emph{loop body} $B$. The loop header contains only a transition
to the loop body and the \emph{loop exit} $B_e$. 
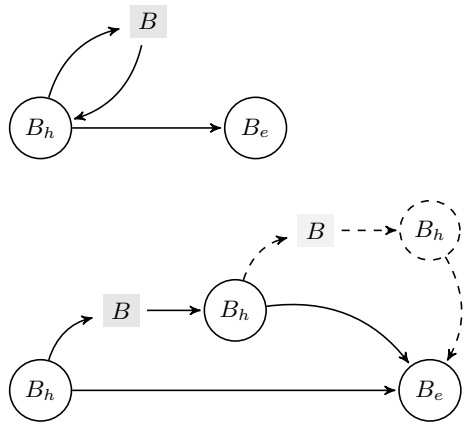
\begin{figure}
\centering
\begin{minipage}{.73\linewidth}
%\centering
\begin{tikzpicture}[->,>=stealth',shorten >=1pt,auto,node distance=2cm,
                    semithick]
  \tikzstyle{every state}=[fill=none,draw,text=black]
  \tikzstyle{lbody}=[line width=2mm,join=round,fill=black!10]  
  
  \node[state]         (A)                    {$B_h$};
  %\node[state]         (B) [above right of=A] {$\pp_1$};
  %\node[state]         (B1) [right of=B] {$\pp_2$};
  %\node[state]         (B2) [above of=B1 ] {$\pp_3$};
  %\node[state]         (B3) [right of=B1] {$\pp_4$};  
  \node[lbody]         (B) [above right of=A]   
  {$B$};
  \node[state]         (C) [below right of=B] {$B_e$};
  %\node (L) [above of=B, label=above:$\; \; \; \; \; Loopbody \, B$]{};
   
  \path (A) edge  [bend left] node {}  (B)     
  	         edge	node {} (C)       
        (B) edge [bend left]  node {} (A);
                
\end{tikzpicture}
\end{minipage}

\vspace{0.5cm}
\begin{minipage}{.73\linewidth}
%\centering
\begin{tikzpicture}[->,>=stealth',shorten >=1pt,auto,node distance=1.5cm,
                    semithick]
  \tikzstyle{every state}=[fill=none,draw,text=black]
  \tikzstyle{lbody}=[line width=2mm,join=round,fill=black!10]  
  \tikzstyle{dottedstate}=[circle,fill=none,draw,text=black,dashed]
  \tikzstyle{dottedlbody}=[line width=2mm,join=round,fill=black!5, dashed]  
  
  \node[state]         (A)                    {$B_h$};
  \node[lbody]         (B) [above right of=A] {$B$};   
  \node[state]         (C) [right of=B] {$B_h$};
  \node[dottedlbody]         (D) [above right of=C] {$B$};
  \node[dottedstate]         (E) [right of=D] {$B_h$};  

  \node[state]         (F) [below right of=D, right of=C] {$B_e$};
  %\node (L) [above of=B, label=above:$\; \; \; \; \; Loopbody \, B$]{};
   
  \path (A) edge  [bend left] node {}  (B)     
  	         edge	node {} (F)       
        (B) edge node {} (C)
        (C) edge [bend left] node {} (F);
  
  \path[dashed] (C) edge [bend left]  node {}  (D)
  		(D) edge   node {}  (E)
  		(E) edge [bend left]  node {}  (F);
                
\end{tikzpicture}
\end{minipage}
\caption[Loop abstraction]{Finite loop unwinding} \label{fig:unwinding}
\end{figure}
We can now unwind the loop once by simply redirecting the target of the
back-edge that goes from $B$ to $B_h$, to $B_e$ (and thus transforming the loop
into an if-then-else). 

To unwind the loop $k$-times, for each unwinding, we have to create a copy of
$B$ and $B_h$, and redirect the outgoing edge of the $B$ introduced in the
previous unwinding to the newly introduced $B_h$. That is, the loop is
transformed to an if-then-else tree of depth $k$.
%
%The main difference of this unwinding to the unwinding proposed in e.g.,
%ESC~\cite{Flanagan:2002:ESC:543552.512558} is that we simply remove all
%executions of the program that iterate a loop more than $k$-times instead of
%letting them block. That is, our loop unwinding is an under-approximation of the
%set of feasible executions of the program.
%
This abstraction is limited to finding executions that reach statements within
less than $(k+1)$ loop iterations, however, as the abstraction never adds a
feasible execution, we have the guarantee that this execution really exists.
\begin{lemma}\label{thrm:soundness}
Given a program $\prg$ and a program $\prg'$ which is generated from $\prg$ by
$k$-times loop unwinding. Any feasible execution of $\prg'$ is also a feasible
execution of $\prg$.
\end{lemma}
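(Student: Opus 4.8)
The plan is to show that the loop unwinding transformation establishes a correspondence between complete paths in $\prg'$ and complete paths in $\prg$ that preserves the sequence of statements executed, and then to invoke the earlier characterization of feasible paths via the weakest-liberal precondition (the second theorem in Section~\ref{sec:preliminaries}). The key observation is that the transformation only ever \emph{copies} blocks ($B$ and $B_h$) and \emph{redirects} back-edges to $B_e$ (or to a freshly introduced header copy); it never introduces new statements and never weakens or removes an \textbf{assume}. Hence every block in $\prg'$ is a verbatim copy of some block in $\prg$, and this gives a natural projection $\sigma$ from blocks of $\prg'$ to blocks of $\prg$.

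First I would make the single-unwinding step precise and argue it in isolation, since the $k$-fold case follows by induction. For one unwinding, the only change is that the back-edge $B \to B_h$ is replaced by $B \to B_e$; every other edge, and every statement in every block, is untouched. Given any complete path $\pi'$ in $\prg'$, I would define $\pi = \sigma(\pi')$ by replacing each block of $\pi'$ with the $\prg$-block it was copied from. I then need two facts: (i) $\pi$ is a genuine path in $\prg$, i.e. consecutive blocks are connected by a transition in $\prg$ — this holds because each edge used by $\pi'$ is either an original edge of $\prg$ or the redirected edge $B \to B_e$, and in $\prg$ there is a transition $B \to B_h$ and a transition $B_h \to B_e$, so a $B \to B_e$ step in $\prg'$ can be realized in $\prg$ by passing through $B_h$ (note $B_h$ contains no statements beyond its $\goto$, so inserting it does not change the statement sequence); and $\pi$ starts in the initial block and ends in the final block because $\sigma$ fixes those. (ii) The concatenated statement sequences $\st'_0;\ldots;\st'_m$ of $\pi'$ and $\st_0;\ldots;\st_n$ of $\pi$ are the same sequence of statements (up to the statement-free header blocks just mentioned), hence $\wlp(\st'_0;\ldots;\st'_m, \false)$ and $\wlp(\st_0;\ldots;\st_n, \false)$ are the same formula. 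By the feasible-path theorem, any valuation $\val$ witnessing feasibility of $\pi'$ (i.e. $\val \not\models \wlp(\st'_0;\ldots;\st'_m,\false)$) witnesses feasibility of $\pi$. Since a feasible execution of $\prg'$ is, by definition, a feasible execution of some complete path $\pi'$ of $\prg'$, this yields a feasible execution of $\prg$.

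For the general $k$-fold unwinding I would induct on $k$: unwinding $k$ times is unwinding once and then unwinding the resulting inner loop $k-1$ times (or, equivalently, one can describe the if-then-else tree of depth $k$ directly and define $\sigma$ collapsing each level's copy of $B$, $B_h$ onto the originals). Either way, composing the projections across unwinding steps gives a statement-preserving projection from complete paths of $\prg'$ to complete paths of $\prg$, and the $\wlp$ argument above applies verbatim at the end.

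The main obstacle I anticipate is purely bookkeeping rather than conceptual: carefully defining the block-copy relation and the projection $\sigma$ when loops are nested (an inner loop lives inside the body $B$, so unwinding the outer loop duplicates the inner loop together with all its machinery) and checking that the reducibility assumption guarantees every loop has the clean "unique header with exactly one body-edge and one exit-edge" shape that Figure~\ref{fig:unwinding} relies on. Once the projection is set up so that it is the identity on statements and maps transitions of $\prg'$ to (possibly header-padded) transitions of $\prg$, the feasibility transfer is immediate from the $\wlp$ characterization, and in particular the lemma's direction (every execution of $\prg'$ exists in $\prg$) does not require any surjectivity of $\sigma$, only that it is well-defined and statement-preserving.
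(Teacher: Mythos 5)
The paper states Lemma~\ref{thrm:soundness} without proof, so there is nothing to compare against; your argument is correct and is essentially the justification the authors leave implicit. The key points are all present: unwinding only copies statement-identical blocks and redirects back-edges, so every complete path of $\prg'$ projects (via $\sigma$, padding with the statement-free header $B_h$ where a redirected edge $B \to B_e$ is used) to a complete path of $\prg$ with the same statement sequence, and feasibility then transfers immediately by the $\wlp$ characterization since the two paths yield syntactically identical formulas. Your closing caveats about nested loops and the reducibility assumption are exactly the right bookkeeping to flag.
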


\paragraph{Procedure Calls.} Procedure calls are another problem when computing
the weakest (liberal) precondition. First, they can introduce looping control
flow via recursion, and second, inlining each procedure call might dramatically
increase the size of the program that has to be considered. For recursive
procedure calls, we can apply the same loop unwinding used for normal loops.

To inline a procedure, we split the block at the location of the procedure
call in two blocks and add all blocks of the body of the called procedure in
between (and rename variables and labels if necessary). Then, we add additional assignments to
map the parameters of the called procedure to the arguments used in the
procedure call and the variable carrying the return value of the procedure to
those receiving it in the calling procedure.

If inlining all procedure calls is not feasible due to the size of the program,
the call has to be replaced by a summary of the procedure body instead. We
propose a technique that retains the soundness from Lemma~\ref{thrm:soundness}
later on in Section~\ref{sec:procsum}.

\paragraph{Single Static Assignment.} For the resulting loop-free program, we
perform a single static assignment transformation~\cite{Cytron:1991:ECS:115372.115320}
which introduces auxiliary variables to ensure that each program variable is
assigned at most once on each execution path
\cite{Flanagan:2001:AEE:360204.360220}. For convenience we use the following
notation: given a program variable $v$, the single static assignment
transformation transforms an assignment $v := v + 1$ into $v_{i+1} := v_{i} + 1$,
where $v_{i+1}$ and $v_{i}$ are auxiliary variable (and the index represents the
incarnation of $v$). In the resulting program, each variable is written at most
once. Hence, we can replace all assignments by assumptions without altering
the feasible executions of the program. In that sense, the transformed program
is passive as it does not change the values of variables. As single static assignment is used
frequently in verification, we refer to the related work for more details (e.g.,
\cite{Flanagan:2001:AEE:360204.360220,Leino:2005:EWP:1066417.1066421,barnett2005}).

\section{Reachability Verification Condition}\label{sec:vc}

This section explains how to find a formula $\vc$ (the reachability
verification condition) such that every satisfying valuation $\val$ 
corresponds to a terminating execution of the program.  Moreover, it is
possible to determine from the valuation, which blocks of the program
were reached by this execution.  For this purpose $\vc$ contains an
auxiliary variable $\rvar_i$ for each block that is true if the block is
visited by the execution.  From such an execution we can derive a test
case by looking at the initial valuation of the variables.

A test case of a program can be found using the weakest (liberal)
precondition.  If a state satisfies the weakest precondition
$wp(S,\true)$ of a program $S$ it will produce a non-failing run.
However, it may still block in an \texttt{assume} statement.  Since we
desire to find non-blocking test cases we follow~\cite{vstte12} and
use the weakest liberal precondition of false.  A state satisfies
$\wlp(S, \false)$ if and only if it does not terminate.  Hence we
can use $\lnot \wlp(S,\false)$ to find terminating runs of $S$.

For a loop-free program, computing the weakest (liberal) precondition
is straight forward and has been discussed in many previous articles
(e.g.,
\cite{barnett2005,Leino:2005:EWP:1066417.1066421,Flanagan:2002:ESC:543552.512558,Grigore:2009:SPU:1557898.1557904}).
To avoid exponential explosion of the formula size, for each block
\[
\textit{Block}_i ::= \, i : S_i; \goto Succ_i
\] 
we introduce an auxiliary variable $B_i$ that represents the formula
$\lnot \wlp(Block_i, \false)$, where $Block_i$ is the program fragment
starting at label $i$ and continuing to the termination point of the
program.  These variables can be defined as
\begin{align*}
\WLP : \bigwedge_{0\leq i< n}
B_i &\equiv \lnot \wlp\biggl(S_i, \bigwedge_{j \in Succ_i} \lnot B_j\biggr) \\
{}\land B_n &\equiv \lnot\wlp(S_n, \false). 
\end{align*}
Introducing the auxiliary variables avoids copying the $\wlp$ of the successor 
blocks. 
If we are interested in a terminating execution that starts in the 
initial location $0$, we can
find a satisfying valuation for
$$ \WLP \land  B_0$$
\begin{lemma}\label{lemma:wlp}
There is a satisfying valuation $\val$ for the formula $\WLP$ with
$\val(B_i)= \true$ if and only if there is a 
terminating execution for the program fragment starting at the block 
$Block_i$.
\end{lemma}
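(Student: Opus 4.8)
The plan is to establish a state-by-state correspondence between satisfying valuations of $\WLP$ and initial states of terminating runs, and then read off the lemma. Throughout I use that, by Section~\ref{sec:loopunwinding}, the program underlying $\WLP$ is loop-free and \emph{passive}: every $S_i$ is a sequence of \texttt{assume} statements and the block graph is acyclic, so I may assume w.l.o.g.\ that the blocks are numbered with $j \in Succ_i$ implying $j > i$. For each $i$ write $\mathit{Frag}_i$ for the program fragment that starts at $Block_i$ and runs to the end of the unique final block $Block_n$, and let $\wlp(\mathit{Frag}_i,\false)$ be given by $\wlp(\mathit{Frag}_n,\false)=\wlp(S_n,\false)$ and $\wlp(\mathit{Frag}_i,\false)=\wlp\bigl(S_i,\bigwedge_{j\in Succ_i}\wlp(\mathit{Frag}_j,\false)\bigr)$ for $i<n$, where the conjunction is the weakest liberal precondition of the non-deterministic $\goto$. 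Since the program is loop-free there are only finitely many complete sub-paths from $i$ to $n$, and conjunctivity of $\wlp(S_i,\cdot)$ makes $\wlp(\mathit{Frag}_i,\false)$ their conjunction; applying the $\wlp$-characterisation of feasible executions from Section~\ref{sec:preliminaries} to each such sub-path then shows that $\mathit{Frag}_i$ has a terminating execution starting from the state given by a valuation $\val$ if and only if $\val\not\models\wlp(\mathit{Frag}_i,\false)$.

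The core step is the following claim, proved by induction from $B_n$ down to $B_0$: for every valuation $\val$ with $\val\models\WLP$ and every $i$, $\val(B_i)=\true$ iff $\val\not\models\wlp(\mathit{Frag}_i,\false)$, equivalently iff $\mathit{Frag}_i$ has a terminating execution from the state given by $\val$. The base case $i=n$ is immediate from the defining conjunct $B_n\equiv\lnot\wlp(S_n,\false)$ of $\WLP$. For the step, passivity lets me write $\wlp(S_i,Q)\equiv(\gamma_i\implies Q)$, where $\gamma_i$ is the conjunction of the expressions assumed in $S_i$, so the $\WLP$-conjunct for $i$ is equivalent to $B_i\equiv\gamma_i\land\bigvee_{j\in Succ_i}B_j$, whereas $\lnot\wlp(\mathit{Frag}_i,\false)$ is equivalent to $\gamma_i\land\bigvee_{j\in Succ_i}\lnot\wlp(\mathit{Frag}_j,\false)$. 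The induction hypothesis says $\val(B_j)$ and $\val(\lnot\wlp(\mathit{Frag}_j,\false))$ have the same truth value for each successor $j$ (each such $j>i$ has already been treated), so the two right-hand sides evaluate identically under $\val$; this matches the operational reading as well, since a run of $\mathit{Frag}_i$ from $\val$ first executes the assumptions $S_i$ — blocking if $\val\not\models\gamma_i$ and otherwise leaving the state unchanged — and then continues as a run of $\mathit{Frag}_j$ from $\val$ for some $j\in Succ_i$.

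It remains to deduce the lemma from the claim. For the forward direction, if $\val\models\WLP$ and $\val(B_i)=\true$ then the claim immediately yields a terminating execution of the fragment starting at $Block_i$. For the converse, take any terminating execution of that fragment, let $\val_0$ be its initial valuation of the program variables, and extend $\val_0$ to $B_0,\dots,B_n$ by evaluating the $\WLP$-equivalences in the order $B_n,B_{n-1},\dots,B_0$: when $B_i$ is processed every $B_j$ with $j\in Succ_i$ is already fixed, so this yields a well-defined valuation $\val$ with $\val\models\WLP$ by construction, and the claim then gives $\val(B_i)=\true$.

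I expect the step case of the claim to be the main obstacle, specifically the bookkeeping that justifies replacing the successor preconditions $\wlp(\mathit{Frag}_j,\false)$ by the opaque Boolean literals $B_j$ inside $\wlp(S_i,\cdot)$. This is sound only because passification has removed all assignments: then $\wlp(S_i,\cdot)$ is the implication $\gamma_i\implies(\cdot)$, whose truth under $\val$ depends on its argument solely through that argument's truth value under $\val$, so the induction hypothesis applies verbatim. With assignments still present, the substitution inside $\wlp$ would act on $\wlp(\mathit{Frag}_j,\false)$ but not on the variable $B_j$, and the correspondence would fail; I would therefore state the passivity assumption explicitly at the start of the proof and keep the induction strictly aligned with the acyclic block order.
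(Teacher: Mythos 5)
Your proof is correct. Note that the paper itself does not prove Lemma~\ref{lemma:wlp} at all --- it only points to the external reference \cite{vstte12} --- so there is no in-paper argument to compare against; your write-up supplies the missing self-contained proof. The route you take is the standard one and it is sound: unfold $\wlp(\mathit{Frag}_i,\false)$ over the acyclic block order, use conjunctivity of $\wlp$ to reduce the fragment to its finitely many complete sub-paths, and then show by downward induction that each defining equivalence in $\WLP$ forces $\val(B_i)$ to coincide with $\val(\lnot\wlp(\mathit{Frag}_i,\false))$. Your closing remark is also the right one to emphasise: the substitution of the opaque literal $B_j$ for the formula $\wlp(\mathit{Frag}_j,\false)$ inside $\wlp(S_i,\cdot)$ is only justified because the program has been passified, so that $\wlp(S_i,\cdot)$ is truth-functional in its argument ($\gamma_i\implies\cdot$) rather than a substitution that would act on the free variables of $\wlp(\mathit{Frag}_j,\false)$ but not on $B_j$. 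That is exactly the hypothesis the paper's transformation pipeline (Section~\ref{sec:loopunwinding}) is designed to establish, and making it explicit, as you do, is what turns the sketch into a proof.
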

Proof is given in \cite{vstte12}.

Thus a satisfying valuation $\val$ of $\WLP \land B_0$
corresponds to a terminating execution of the whole program.
Moreover if $\val(B_i)$ is true, the same valuation also
corresponds to a terminating execution starting at the block $Block_i$.
However, it does not mean that there is an
execution that starts in the initial state, visits the block
$Block_i$, and then terminates.  This is because the formula does 
not encode that $Block_i$ is reachable from the initial state.

To overcome this problem one may use the strongest post-condition to
compute the states for which $Block_i$ is reachable. This
roughly doubles the formula.  In our case there is a more simple check
for reachability.  Again, we introduce an auxiliary variable
$\rvar_{i}$ for every block label $i$ that holds
if the execution reaches $Block_i$ from the initial state and
terminates.  Let $Pre_i$ be the set of predecessors of
$Block_i$, i.e., the set of all $j$ such that the final $\goto$
instruction of $Block_j$ may jump to $Block_i$.  Then we can fix the
auxiliary variables $\rvar_i$ using $\WLP$ as follows
\[
\vc: \WLP \land \rvar_0 \equiv B_0 \land 
\bigwedge_{1\leq i\leq n} \biggl(\rvar_i \equiv B_i \land \bigvee_{j\in Pre_i} \rvar_j\biggr).
\]
That is, the reachability variable of the initial block is set to true
if the run is terminating.  The reachability variable of
other blocks is set to true if the current valuation describes a
normally terminating execution starting at this block and at least one
predecessor has its reachability variable set to true.

\begin{theorem}\label{thm:vc}
There is a valuation $\val$ that satisfies $\vc$ with $\val(\rvar_0)=true$
if and only if the 
corresponding initial state leads to a feasible complete path $\pi$ 
for the procedure.
Moreover, the value of the reachability variable $\val(\rvar_i)$ is true 
if and only if there is a path $\pi$ starting in this initial state that visits block $Block_i$.
\end{theorem}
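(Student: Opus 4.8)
The plan is to build on Lemma~\ref{lemma:wlp} together with the fact that, after the transformations of Section~\ref{sec:loopunwinding}, the control-flow graph of the (inlined, unwound) procedure is acyclic and the program is passive, so that every $\val$ satisfying $\WLP$ has a \emph{unique} extension to the variables $\rvar_i$ satisfying the remaining equivalences of $\vc$: the $B_i$ are determined from the program variables in reverse topological order (each depends only on successors) and then the $\rvar_i$ in topological order (each depends only on predecessors). In particular, $\WLP$ really just \emph{defines} the $B_i$, so I may use Lemma~\ref{lemma:wlp} in its pointwise form: for every $\val\models\WLP$ and every block $i$, $\val(B_i)=\true$ holds iff there is a path from $Block_i$ to the final block $Block_n$ all of whose \textbf{assume} statements are satisfied by $\val$; call such a path ``a feasible suffix at $i$ witnessed by $\val$''.

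I would prove the second (``moreover'') claim first, since the first follows from it. Fix $\val\models\WLP$ and let the $\rvar_i$ be the induced values. By induction along a topological order of the blocks I would show: $\val(\rvar_i)=\true$ iff there is a complete path $\pi$ visiting $Block_i$ that is feasible with $\val$ as its witnessing execution. Base case $Block_0$: since $\rvar_0\equiv B_0$, every complete path visits $Block_0$, and a complete path witnessed by $\val$ is exactly a feasible suffix at $0$, the claim is immediate from the pointwise form of Lemma~\ref{lemma:wlp}. Inductive step, using $\rvar_i\equiv B_i\land\bigvee_{j\in Pre_i}\rvar_j$: for the forward direction, $\val(B_i)=\true$ yields a feasible suffix at $i$, and some $j\in Pre_i$ with $\val(\rvar_j)=\true$ yields, by the induction hypothesis, a complete path visiting $Block_j$; splicing the prefix of that path up to $Block_j$ with the edge from $Block_j$ to $Block_i$ and the feasible suffix at $i$ gives a complete path visiting $Block_i$ that is still feasible with $\val$, because the set of \textbf{assume} statements along a concatenation is the union of the sets along the two pieces --- this is precisely where passivity (each variable assigned at most once) is used. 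For the backward direction, if $\pi=Block_0\ldots Block_j\,Block_i\ldots Block_n$ is complete, feasible with $\val$, and visits $Block_i$, then its suffix at $i$ gives $\val(B_i)=\true$ by Lemma~\ref{lemma:wlp}, $j\in Pre_i$, and $\pi$ itself is a complete feasible path visiting $Block_j$, so $\val(\rvar_j)=\true$ by the induction hypothesis; hence $\val(\rvar_i)=\true$.

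For the first claim: if $\val\models\vc$ and $\val(\rvar_0)=\true$, then $\val\models\WLP$ and the case $i=0$ of the moreover claim gives a complete path feasible with $\val$, so the initial state corresponding to $\val$ admits the execution along that path. Conversely, a feasible complete path $\pi$ comes with a witnessing execution, i.e.\ by passivity a valuation $\val$ of the variables satisfying every \textbf{assume} along $\pi$; extending $\val$ by the uniquely determined $B_i$ and $\rvar_i$ gives $\val\models\vc$, and since $\pi$ is a feasible suffix at $0$ witnessed by $\val$, we get $\val(\rvar_0)=\val(B_0)=\true$.

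The main obstacle I expect is not the induction but making the splicing and restriction steps rigorous, i.e.\ pinning down the correspondence between ``a valuation of the passified variables consistent with a path'' and ``an execution of that path from a given initial state'', and arguing that concatenating a terminating suffix at $Block_i$ with any prefix reaching $Block_i$ never creates an unsatisfiable \textbf{assume} --- which rests on single static assignment ensuring that distinct path fragments cannot disagree on the value of any variable. A secondary point to treat carefully is the move from the existential phrasing of Lemma~\ref{lemma:wlp} to its use for a fixed $\val$, justified by the observation that $\WLP$ determines the $B_i$ functionally.
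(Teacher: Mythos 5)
Your proof is correct and follows essentially the same route as the paper's: use Lemma~\ref{lemma:wlp} to tie the $B_i$ to terminating suffixes, then induct over the topological order of the loop-free blocks to tie each $\rvar_i$ to reachability, splicing a prefix that reaches a predecessor $Block_j$ with the edge to $Block_i$ and a terminating suffix there. You are merely more explicit than the paper about the functional determination of the auxiliary variables and about the role of passivity in justifying the splice.
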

\begin{proof}
  Let there be a feasible path $\pi$ and let $\val$ be the
  corresponding valuation for the initial variables.  If one sets the
  value of each of the auxiliary variable $B_i$ and $\rvar_i$
  according to its definition in $\vc$, then $\vc$ is satisfied by
  $\val$.  Moreover, the $B_i$ variables for every visited block must
  be true by Lemma~\ref{lemma:wlp}.  Then also $\val(\rvar_0)$ must be
  true, i.e., the reachability variable for the initial state must be
  true. By induction one can see that $\val(\rvar_i)$ must also be
  true for every visited block $Block_i$.

  For the other direction, let $\val$ be a satisfying valuation for
  $\vc$ with $\val(\rvar_0)=true$. Then also $\val(B_0)=true$
  holds. Hence, by Lemma~\ref{lemma:wlp} this valuation corresponds to
  a feasible path $\pi$.  Let $\val(\rvar_i)=true$ for some block.  If
  $i=0$ then this is the initial block which is visited by the
  feasible path $\pi$.  For $i\neq 0$ there is 
  some predecessor $j\in Pre_i$ with $\val(\rvar_j) = true$.
  By induction over the
  order of the blocks (note that the code is loop-free)
  one can assume that there is a feasible path starting in this initial state
  that visits $Block_j$.  
  Since $Block_j$ ends with a non-deterministic $\goto$
  that can jump to $Block_i$, the latter block is reachable.  Moreover
  since $\rvar_i$ is true, also $B_i$ must be true and by
  Lemma~\ref{lemma:wlp} the valuation corresponds to a terminating run
  starting at Block $Block_i$.  Thus, there is a run that starts at
  the initial state, reaches block $Block_i$, and terminates.
\end{proof}

Thus $\vc$ is the reachability verification condition that can be used
to generate test cases of the program that reach certain blocks.  To
cover all statements by test cases, one needs to find a set of
valuations for $\vc$, such that each $\rvar_i$ variable is true at least
in one valuation.  The following section will tackle this problem.

\section{Covering algorithms}\label{sec:algorithm}

We can now identify feasible executions through a block simply by
checking if the reachability variable associated with this block evaluates to
$\true$ in a satisfying valuation of the reachability verification condition.
Further, due to the single static assignment performed before generating the formula,
we can identify the initial values for each variable that are needed to
force the execution of this path. That is, a valuation $\val$ satisfying the
$\vc$ can serve as a test case for a block associated with a reachability
variable $\rvar$, if $\val(\rvar)=\true$.
\begin{definition}[Test Case]
Given a reachability verification condition $\vc$ of a program. Let $B$ be a
block in this program, and $\rvar$ be the reachability variable associated with
this block. A \emph{test case} for the block $B$ is a valuation $\val$ of $\vc$,
such that $\val\models\vc$ and $\val(\rvar)$ is $\true$.
\end{definition}

In the following we present two algorithms to compute test cases for loop-free
programs. The first algorithm computes a set of test cases to cover all
feasible control-flow path, the second one computes a more compact set that only
covers all feasible statements.

\paragraph{Path Coverage Algorithm.}
To efficiently generate a set of test cases that covers all feasible
control-flow paths, we need an algorithm that checks which combinations of 
reachability variables in a reachability verification condition can be set to
$\true$. That is, after finding one satisfying valuation for a reachability
verification condition, this algorithm has to modify the next query in a way
that ensures, that the same valuation is not computed again. This procedure has
to be repeated until no further satisfying valuation can be found.

\linesnumbered
\SetAlFnt{\small}
\begin{algorithm}[h]
\KwIn{$\vc$: A reachability verification condition, \\
\quad\quad\quad $\rvars = \{\rvar_0,\dots,\rvar_n\}$: The set of reachability variables}
\KwOut{$\tcs$: A set of test cases covering all feasible paths. }
\SetKwFunction{getVarsWritten}{getVarsWritten}
\SetKwFunction{getVarsRead}{getVarsRead}
\dontprintsemicolon
\Begin{
  $\psi \leftarrow \vc$ \label{src:helper1} \;
  $\tcs \leftarrow \{\}$ \;
  $\val \leftarrow \checksat{\psi}$ \label{src:checksat1}\;
  \While{$\val \neq \{\}$} {
      $\tcs \leftarrow \tcs \cup \{\val \}$ \label{src:addtc}\;
      $\formula \leftarrow \false  $ \;      
	  \ForEach{ $\rvar$ in $\rvars$}{\label{src:buildbc}
	   	\eIf{$\val(\rvar)=\true$} {	   		
	   		$\formula \leftarrow \formula \vee \neg \rvar$ \;	   		
	   	} {	   		
	   		$\formula \leftarrow \formula \vee \rvar$ \;	   		
	   	}
	  }
	  $\psi \leftarrow \psi \wedge \formula $ \label{src:addbc} \;
	  $\val \leftarrow \checksat{\psi}$ \label{src:checksat2} \;
  }
  \Return $\tcs$
}
\caption{\algoA}
\label{alg:maxsat1}
\end{algorithm}

Algorithm $\algoA$, given in Algorithm~\ref{alg:maxsat1}, uses
\emph{blocking clauses} to guarantee that the every valuation is only returned
once. The blocking
clause is the negated conjunction of all assignments to reachability variables
in a valuation $\val$. The algorithm uses the oracle-function \texttt{checksat}
(see line~4 and 16), which has to be provided by
a theorem prover. The function takes a first-order logic formula as input and
returns a satisfying assignment for this formula in form of a set of pair of
variable and value for each free variable in that formula. If the formula is not
satisfiable, \texttt{checksat} returns the empty set. 

The algorithm uses a local copy $\psi$ of the reachability verification condition
$\vc$. As long as \texttt{checksat} is able to compute a satisfying valuation
$\val$ for $\psi$, the algorithm adds this valuation to the set of test
cases $\tcs$ (line~6), and then builds a blocking clause
consisting of the disjunction of the negated reachability variables which are assigned to
$\true$ in $\val$ (line~8). The formula $\psi$ is conjuncted
with this blocking clause (line~13), and the algorithm starts over
by checking if there is a satisfying valuation for the new formula
(line~14). The algorithm terminates when $\psi$ becomes
unsatisfiable. 
\begin{theorem}[Correctness of \algoA]\label{thm:alg1}
Given a loop-free and passive program $P$ with verification condition $\vc$. Let
$\rvars$ be the set of reachability variables used in $\vc$. Algorithm $\algoA$,
started with the arguments $\vc$ and $\rvars$, terminates and returns a set
$\tcs$. 
For any feasible and complete path $\pi$ there is a test case in $\tcs$ for
this path.
\end{theorem}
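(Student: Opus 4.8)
The plan is to establish the three claims in the theorem—termination, well-definedness of the output, and coverage—in that order, with most of the work going into termination and coverage.

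\medskip

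\emph{Termination.} First I would observe that each iteration of the \textbf{while} loop conjoins a new blocking clause $\formula$ to $\psi$, and this clause is built solely from the $2|\rvars|$ literals over the reachability variables $\rvar_0,\dots,\rvar_n$. Concretely, the blocking clause added in an iteration with valuation $\val$ is the unique clause that is falsified exactly by those Boolean assignments to $\rvars$ that agree with $\val$ on every $\rvar_i$. Hence, once a particular Boolean pattern over $\rvars$ has been returned, every future satisfying valuation of $\psi$ must differ from it on at least one reachability variable, so that pattern can never recur. Since there are at most $2^{|\rvars|}$ such patterns, the loop runs at most $2^{|\rvars|}$ times before $\psi$ becomes unsatisfiable and \texttt{checksat} returns the empty set, terminating the loop. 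I should also note the easy case where the first \texttt{checksat} call already returns $\{\}$ (no feasible complete path), in which the loop body is never entered.

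\medskip

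\emph{Well-definedness of the output.} This is immediate: $\tcs$ is initialized to $\emptyset$ and only ever has elements added to it (line~\ref{src:addtc}), and each added $\val$ is, by construction, a model of the current $\psi$, which entails $\vc$; so every element of $\tcs$ is a genuine valuation satisfying $\vc$, i.e., a test case in the sense of the preceding definition. The algorithm returns $\tcs$, which is a set.

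\medskip

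\emph{Coverage.} Fix a feasible complete path $\pi = \textit{Block}_0 \ldots \textit{Block}_n$. By Theorem~\ref{thm:vc}, there is a valuation $\val_\pi$ satisfying $\vc$ with $\val_\pi(\rvar_i)=\true$ for every block $\textit{Block}_i$ on $\pi$ and $\val_\pi(\rvar_i)=\false$ for the others; write $\sigma_\pi$ for the induced Boolean pattern over $\rvars$. I claim some test case in $\tcs$ realizes this pattern, which suffices since any valuation with that pattern witnesses $\pi$'s feasibility (again by Theorem~\ref{thm:vc}, the direction giving a feasible path from the $\rvar$-values). Suppose not. Then at loop termination $\psi$ is unsatisfiable, yet no blocking clause added during the run was generated from a valuation with pattern $\sigma_\pi$. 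Each added blocking clause is falsified only by valuations whose $\rvars$-pattern equals that of the generating valuation; since none of those patterns is $\sigma_\pi$, the valuation $\val_\pi$ satisfies every blocking clause, and it satisfies $\vc$, hence it satisfies the final $\psi$—contradicting unsatisfiability. Therefore $\tcs$ contains a valuation with pattern $\sigma_\pi$, which is a test case for every block on $\pi$ and in particular serves as the required test case for the path $\pi$.

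\medskip

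\emph{Main obstacle.} The delicate point is the precise bookkeeping in the coverage argument: one must be careful that a blocking clause built from $\val$ blocks \emph{all and only} the valuations sharing $\val$'s reachability pattern, so that $\val_\pi$ survives as long as its own pattern has not yet been emitted. This hinges on the blocking clause being a clause purely over $\rvars$ (no constraints on program variables), which is exactly how lines~\ref{src:buildbc}–\ref{src:addbc} construct it. Once this invariant is stated cleanly, termination and coverage both fall out; the rest is routine.
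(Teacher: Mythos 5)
Your proof is correct and follows essentially the same route as the paper's: termination from the fact that there are only finitely many $\rvars$-patterns and each blocking clause rules out exactly one of them, and coverage via Theorem~\ref{thm:vc} together with the observation that the valuation for $\pi$ satisfies every blocking clause until its own pattern is emitted --- your contradiction argument merely makes explicit the paper's claim that ``such a valuation must be found before a corresponding blocking clause is inserted.'' One cosmetic nit: Theorem~\ref{thm:vc} does not guarantee $\val(\rvar_i)=\false$ for blocks off $\pi$ (other blocks may also be reachable from the same initial state), but your argument only needs the induced pattern to be fixed and to set every block of $\pi$ to true, so nothing breaks.
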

\begin{proof}
  There are only finitely many solutions for the variables $\rvars$
  that will satisfy the formula $\vc$.  Due to the introduction of the
  blocking clause, every solution will be found only once.  Hence,
  after finitely many iteration the formula $\psi$ must be
  unsatisfiable and the algorithm terminates.  If $\pi$ is a feasible
  and complete path, then by Theorem~\ref{thm:vc} there is a valuation
  $\val$ with $\val(\rvar)=true$ for every block visited by $\pi$.
  Such a valuation must be found by the algorithm before a
  corresponding blocking clause is inserted into $\psi$.  The
  corresponding test case is then inserted into $\tcs$ and is a test
  case for $\pi$.
\end{proof}
Note that $\algoA$ is complete for loop-free programs. For arbitrary programs
that have been transformed using the steps from Section~\ref{sec:loopunwinding},
the algorithm still produces only feasible test cases due to
the soundness of the abstraction.

The advantage of using blocking clauses is that $\algoA$ does not restrict the
oracle \texttt{checksat} in how it should explore the feasible paths encoded in
the reachability verification condition. The drawback of $\algoA$ is that, for each
explored path, a blocking clause is added to the formula and thus, the
increasing size of the formula might slow down the \texttt{checksat} queries if
many paths are explored. This limits the scalability of our algorithm. In
Section~\ref{sec:experiments} we evaluate how the performance of $\algoA$
changes with an increasing size of the input program.

\paragraph{Statement Coverage Algorithm.} 
In some cases one might only be interested in covering all feasible statements.
To avoid exercising all feasible paths, we present a second algorithm, $\algoB$,
in Algorithm~\ref{alg:maxsat2} that computes a compact set of test cases to
cover all feasible statements. The algorithm uses \emph{enabling clauses}
instead of blocking clauses that prevent the oracle from computing the same
valuation twice. An enabling clause is the disjunction of all reachability
variables that have not been assigned to $\true$ by previous satisfying valuation of the
reachability verification condition.

\linesnumbered
\SetAlFnt{\small}
\begin{algorithm}[h]
\KwIn{$\vc$: A reachability verification condition, \\
\quad\quad\quad$\rvars = \{\rvar_0,\dots,\rvar_n\}$: The set of reachability variables}
\KwOut{$\tcs$: A set of test cases covering all feasible statements. }
\SetKwFunction{getVarsWritten}{getVarsWritten}
\SetKwFunction{getVarsRead}{getVarsRead}
\dontprintsemicolon
\Begin{
  $\tcs \leftarrow \{\}$ \;
  $\val \leftarrow \checksat{\vc}$ \;
  \While{$\val \neq \{\}$} {
      %% $\formula \leftarrow \false  $ \;
      $\tcs \leftarrow \tcs \cup \{\val\}$ \label{src2:tacadd}\;
      
	  \ForEach{ $\rvar$ in $\rvars$}{
	   	\If{$\val(\rvar)=\true$} {	   		
	   		$\rvars \leftarrow \rvars \setminus \{\rvar\} $ \label{src2:remove} \;
	   		$\rvars \leftarrow \removedouble{\rvar,\rvars}$ \label{src2:clones} \;	   		
	   	}
	  }	      	 
%%          $\formula \leftarrow \bigvee_{\rvar\in\rvars}\rvar$
%%          \label{src2:ec}\;
	   $\formula \leftarrow \false  $ \;
  	   \ForEach{ $\rvar$ in $\rvars$}{\label{src2:ec}
  	   	$\formula \leftarrow \formula \vee \rvar$ \;
  	   }
	  
	  $\val \leftarrow \checksat{\vc \wedge \formula }$ \label{src2:check}\;
  }
  \Return $\tcs$
}
\caption{\algoB}
\label{alg:maxsat2}
\end{algorithm}

The algorithm takes as input a reachability verification condition $\vc$, and the
set of all reachability variables $\rvars$ used in this formula. Like $\algoA$,
$\algoB$ uses the oracle function \texttt{checksat}. First, it checks if there
exists any satisfying valuation $\val$ for $\vc$. If so, the algorithm adds
$\val$ to the set of test cases (line~5). Then, the algorithm
removes all reachability variables from the set $\rvars$, which are assigned to
$\true$ in $\vc$ (line~8). While removing the reachability
variables which are assigned to $\true$, the algorithm also has to check if this
reachability variable corresponds to a block created during loop unwinding. In
that case, all clones of this block are removed from $\rvars$ as well using the
helper function $\removedouble{}$ (line~9). After that, the
algorithm computes a new enabling clause $\formula$ that equals to the disjunction of the
remaining reachability variables in $\rvars$ (line~13) and starts
over by checking if $\vc$ in conjunction with $\formula$ is satisfiable
(line~16). That is, conjunction $\vc\wedge\formula$ restricts the
feasible executions in $\vc$ to those where at least one reachability variable
in $\rvars$ is set to $\true$. Note that, if the set $\rvars$ is empty, the
enabling clause $\formula$ becomes $\false$, and thus the conjunction with $\vc$
becomes unsatisfiable. That is, the algorithm terminates if all blocks have been
visited once, or if there is no feasible execution passing the remaining blocks.
 
\begin{theorem}[Correctness of \algoB]\label{thm:alg2}
Given a loop-free and passive program $P$ with reachability verification condition $\vc$. Let
$\rvars$ be the set of reachability variables used in $\vc$. Algorithm $\algoB$,
started with the arguments $\vc$ and $\rvars$, terminates and returns a set
$\tcs$.  For any block in the program
there exists a feasible paths $\pi$ passing this block if and only if
there exists a test case $\val\in\tcs$, that passes this block.
\end{theorem}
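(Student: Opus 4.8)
The plan is to establish three things: termination, and the two directions of the ``if and only if''. For termination, I would argue exactly as in Theorem~\ref{thm:alg1}, but using the monotone shrinking of $\rvars$ instead of an accumulating blocking clause: in every iteration in which \texttt{checksat} returns a non-empty valuation $\val$, at least one reachability variable is set to $\true$ by $\val$ (namely $\rvar_0$, since $\val\models\vc$ forces $\val(\rvar_0)=\true$ by the construction of $\vc$), and that variable — together with its loop-unwinding clones — is removed from $\rvars$ on lines~\ref{src2:remove}--\ref{src2:clones}. Hence $|\rvars|$ strictly decreases each iteration, so after at most $|\rvars|$ iterations either $\rvars=\emptyset$, making the enabling clause $\formula$ equal to $\false$ and $\vc\wedge\formula$ unsatisfiable, or \texttt{checksat} returns $\{\}$ directly. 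Either way the loop exits and the algorithm returns $\tcs$.

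For the ``only if'' direction (every test case passes blocks that genuinely have a feasible path through them), note every element $\val$ added to $\tcs$ on line~\ref{src2:tacadd} is a satisfying valuation of $\vc$ (it was returned by \texttt{checksat} applied to $\vc$ or to $\vc\wedge\formula$, and the latter entails $\vc$). By Theorem~\ref{thm:vc}, such a $\val$ with $\val(\rvar_0)=\true$ corresponds to a feasible complete path $\pi$, and $\val(\rvar_i)=\true$ exactly for the blocks $\pi$ visits; so whenever a test case ``passes'' a block there is indeed a feasible path through it. For the ``if'' direction, suppose some block $B$ with reachability variable $\rvar$ lies on a feasible complete path $\pi$. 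I would argue by contradiction: assume no test case in $\tcs$ passes $B$. Then $\rvar$ is never set to $\true$ by any returned valuation, hence $\rvar$ is never removed from $\rvars$ (here one also uses that if $B$ is a loop-unwinding clone, then \emph{some} clone of $B$ is on $\pi$ iff $B$ is ``coverable'', and the $\removedouble$ bookkeeping only removes clones once a representative is hit — I would phrase the statement so that ``passing a block'' and ``passing one of its clones'' are identified, matching the algorithm's semantics). So at the point the loop terminates, $\rvar\in\rvars$, and therefore the last enabling clause $\formula$ contained the disjunct $\rvar$; yet the loop terminated because \texttt{checksat}$(\vc\wedge\formula)=\{\}$. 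But Theorem~\ref{thm:vc} applied to $\pi$ yields a valuation $\val_\pi\models\vc$ with $\val_\pi(\rvar)=\true$, and $\val_\pi(\rvar')=\true$ for every block $\rvar'$ on $\pi$; since none of the blocks on $\pi$ has been covered (if one had, a different argument applies — see below), $\val_\pi$ satisfies every disjunct-free part and in particular satisfies $\formula$ because $\rvar$ is still a disjunct, contradicting unsatisfiability of $\vc\wedge\formula$.

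The main obstacle is the last sub-case: it is possible that $B$ itself is uncovered while some \emph{other} block of the feasible path $\pi$ through $B$ has already been covered and thus removed from $\rvars$ — then the witness $\val_\pi$ need not satisfy the current enabling clause through that block, only through the disjunct $\rvar$, which is still present, so the argument in fact still goes through; the subtlety is purely in making sure that ``$B$ uncovered $\Rightarrow \rvar\in\rvars$ at termination'' is airtight, i.e. that a variable, once in $\rvars$, leaves only via lines~\ref{src2:remove}--\ref{src2:clones} and only when it (or a clone) is set true. I would therefore state and use a small invariant: \emph{after every iteration, $\rvar\in\rvars$ iff no valuation added so far to $\tcs$ sets $\rvar$ (or any clone of the corresponding block) to $\true$}. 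With that invariant the termination condition $\vc\wedge\formula$ unsatisfiable combined with the Theorem~\ref{thm:vc} witness for $\pi$ gives the required contradiction, completing the proof.
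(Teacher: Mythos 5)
Your proposal is correct and follows essentially the same route as the paper's proof: termination because the enabling clause forces each returned valuation to set at least one remaining $\rvar\in\rvars$ to $\true$ so that $|\rvars|$ strictly shrinks, and completeness because $\vc\wedge\formula$ remains satisfiable---witnessed by the Theorem~\ref{thm:vc} valuation for $\pi$---for as long as the uncovered block's variable stays in $\rvars$, so the loop cannot terminate before covering it; your explicit treatment of the clone bookkeeping and of the soundness direction only spells out what the paper leaves implicit. One small correction: $\val\models\vc$ does \emph{not} force $\val(\rvar_0)=\true$ (the formula only asserts $\rvar_0\equiv B_0$, and the first query is $\checksat{\vc}$ with no enabling clause), so the very first iteration may remove nothing from $\rvars$; termination is unaffected because every subsequent query is conjoined with $\formula$.
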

\begin{proof}
  In every iteration of the loop at least one variable of the set
  $\rvars$ will be removed.  This is because the formula $\phi$ will
  only allow valuations such that for at least one $\rvar\in\rvars$
  the valuation $\val(\rvar)$ is true.  Since $\rvars$ contains only
  finitely many variables the algorithm must terminate.  If $\pi$ is a
  feasible path visiting the block associated with the variable
  $\rvar$, then there is a valuation $\val$ that satisfies $\vc$ with
  $\val(\rvar) = true$.  Such a valuation must eventually be found, since
  $\vc\land\phi$ is only unsatisfiable if $\rvar \notin\rvars$.  The
  valuation is added to the set of test cases $\tcs$.
\end{proof}

The benefit of $\algoB$ compared to $\algoA$ is that it will produce at most
$|\rvars|$ test cases, as each iteration of the loop will generate only one test
case and remove at least one element from $\rvars$. That is, the resulting set
$\tcs$ can be used more efficiently if only statement coverage is needed.
However, the enabling clause might cause the theorem prover which realizes
\texttt{checksat} to take detours or throw away information which could be
reused. It is not obvious which of both algorithms will perform better in terms
of computation time. Therefore, in the following, we carry out some experiments
to evaluate the performance of both algorithms.

Note that, like $\algoA$, $\algoB$ is complete for loop-free programs and sound
for arbitrary programs. That is, any block that is not covered by these
algorithms is unreachable code (in the loop-free program).

\section{Procedure Summaries}\label{sec:procsum}

For large programs, inlining all procedure calls as proposed in
Section~\ref{sec:loopunwinding} might not be feasible. However, replacing them
by using assume-guarantee reasoning as it is done, e.g., in static checking~\cite{Barnett06boogie} is not a feasible solution either. Using
contracts requires the necessary expertise from the programmer to write
proper pre- and postconditions, and thus, it would violate our goal of
having a fully automatic tool. If trivial contracts are generated automatically
(e.g., \cite{doomedjournal}), it will introduce feasible executions that do not
exist in the original program. This would break the \emph{soundness} requirement
from Lemma~\ref{thrm:soundness} that each of the test cases returned by the
algorithms $\algoA$ and $\algoB$ must represent a feasible path in the
(loop-free) program.

Instead of inlining each procedure call, we propose to replace them by a
\emph{summary} of the original procedure which represents \emph{some} feasible
executions of the procedure. The summary can be obtained directly by applying
$\algoA$ or $\algoB$ to the body of the called procedure. Each valuation $\val$
in the set $\tcs$ returned by these algorithms contains values for all
incarnations of the variables used in the procedure body on one feasible
execution. In particular, for a variable $v$, with the first incarnation $v_0$
and the last incarnation $v_n$, $\val(v_0)$ represents one feasible input
value for the considered procedure and $\val(v_n)$ represents the value of $v$
after this procedure returns. That is, given a procedure $P$ with
verification
condition $\vc$ and reachability variables $\rvars$, let $\tcs
=
\algoA(\vc,\rvars)$ or $\tcs = \algoB(\vc,\rvars)$
respectively. Furthermore let $V$ be the
set of variables which are visible to the outside of $P$, that is, parameters
and global variables. The summary $Sum$ of $P$ is expressed by the formula:
\[
Sum := \bigvee_{\val\in\tcs} 
(
\bigwedge_{v\in V}(v_0=\val(v_0))
\wedge
\bigwedge_{v\in V}(v_n=\val(v_n)) 
),
\]
where $n$ refers to the maximum incarnation of a particular variable $v$. The
summary can be interpreted as encoding each feasible path of $P$ by the
condition that, if the initial values for each variable are set appropriately,
the post-state of this execution is established. 
We need an underapproximation of the feasible executions of the procedure
as the procedure summary. Therefore we encode the summary of the previously 
computed
paths and let the theorem prover choose the right
path. In practice, in
particular when using $\algoA$, it can be useful to consider only a subset of
$\tcs$ for the summary construction, as a formula representing all paths might
outgrow the actual verification condition of the procedure.

On the caller side, we can now replace the call to a procedure $P$ by an
assumption $\assume Sum$ where $Sum$ is the procedure summary of $P$. We further have to add some framing assignments to
map the input- and output variables of the called procedure to the one of the
calling procedure. We illustrate this step using the following example program:
\begin{verbatim}
proc foo(a, b) returns c {
  l1: 
      goto l2, l3;
  l2: assume b > 0;
      c := a + 1;
      goto l4;
  l3: assume b <= 0;
      c := a - 1;
      goto l4;
  l4:
}

proc bar(x) returns z {
  l1:       
      z := call foo(x,1);      
}
\end{verbatim}
Applying the algorithm $\algoA$ to the procedure \verb|foo| will result in a
summary like:
\[
Sum := 
 \begin{array}[c]{r}
 (a_0=0\wedge b_0=0)\wedge(c_1=a_0-1) \\
 \vee (a_0=0\wedge b_0=1)\wedge(c_1=a_0+1)
 \end{array}
\]
This summary can be used to replace the call statement in \verb|bar| after the
single static assignment has been performed as follows:
\begin{verbatim}
proc bar(x) returns z {
  l1:       
      assume a0=x;
      assume b0=1;
      assume Sum;
      assume z1=c1;   
}
\end{verbatim}
Note that, to avoid recomputing the single static assignment, when reaching a
call statement, we increment the incarnation count for each variable that might
be modified by this procedure and the incarnation count of each global variable.
Therefore, we have to add frame conditions if a global variable is not changed
by the summary (in this example it is not necessary, as there are no global
variables).

A procedure summary can be seen as a switch case over possible input values.
That is, the summary provides the return values for a particular set of input
values to the called procedure. Any execution that calls the procedure with
other input values becomes infeasible. In that sense, using procedure summaries
is an under-approximation of the set of feasible executions and thus sound for
our purpose.
\begin{lemma}[Soundness]\label{lm:inlinesound}
Given a loop-free procedure $P$ which calls another loop-free procedure $P'$.
Let $P^\#$ be the version of procedure $P$ where all calls to $P'$ have been
replaced by the summary of $P'$. Any feasible execution of $P^\#$ is also a
feasible execution of $P$.
\end{lemma}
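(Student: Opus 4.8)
The plan is to take an arbitrary feasible execution of $P^\#$ and rebuild a matching feasible execution of $P$ in which every summarized call is replayed by an honest execution of $P'$. Fix a feasible execution $e^\#$ of $P^\#$ and let $\mu$ be the initial valuation of the program variables witnessing its feasibility, so that along the control-flow path $\pi^\#$ of $e^\#$ every \textbf{assume} evaluates to $\true$. Recall that $P^\#$ is obtained from $P$ by leaving all blocks untouched except that each call $v := \textbf{call}\,P'(\vec{e})$ is replaced by a gadget consisting of the framing statements that bind the fresh input incarnations $v_0$ ($v\in V$, the parameters and globals visible across the call) to the arguments $\vec{e}$ and to the current global values, the statement $\textbf{assume}\ Sum$, and the framing statements that copy the output incarnations $v_n$ into the receiving variable and the globals. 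I want to build a feasible execution $e$ of $P$ that agrees with $e^\#$ on all variables at every program point of $P$, i.e.\ outside the gadgets; if $P$ also calls other procedures, those call sites are identical in $P$ and $P^\#$ and need no attention, and distinct $P'$-call sites may be treated independently.

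First I would analyse a single summarized call site along $\pi^\#$. Since $e^\#$ is feasible, $\textbf{assume}\ Sum$ holds under $\mu$, and $Sum=\bigvee_{\val\in\tcs}(\bigwedge_{v\in V} v_0=\val(v_0)\wedge\bigwedge_{v\in V} v_n=\val(v_n))$, so some disjunct is satisfied; pick a corresponding $\val\in\tcs$. Because the framing statements before $\textbf{assume}\ Sum$ force each $v_0$ to equal the matching argument (resp.\ the current value of the global $v$), and the framing statements after it force the receiving variable (resp.\ each global $v$) to equal $v_n$, the gadget in $e^\#$ reads exactly the input values $(\val(v_0))_{v\in V}$ and produces exactly the output values $(\val(v_n))_{v\in V}$.

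Next I would turn $\val$ into a genuine execution of $P'$. By the correctness of the covering algorithm (Theorem~\ref{thm:alg1} for $\algoA$, Theorem~\ref{thm:alg2} for $\algoB$) each $\val\in\tcs$ is a test case, hence $\val\models\vc$ with $\val(\rvar_0)=\true$; by Theorem~\ref{thm:vc} the initial state $(\val(v_0))_{v\in V}$ of $P'$ then admits a feasible complete path $\pi'$, i.e.\ executing $P'$ from those inputs along $\pi'$ satisfies all of its \textbf{assume} statements and terminates with final values $(\val(v_n))_{v\in V}$ (these are the last incarnations written along $\pi'$, which is why the summary uses $v_n$). Now construct $e$ by following $\pi^\#$ inside $P$ and, whenever the original call is reached, executing $P'$ along the path $\pi'$ attached to the $\val$ chosen for that call site. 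An induction along $\pi^\#$ shows $e$ and $e^\#$ carry the same variable values up to each call: the two programs run identical statements between gadgets, so at call entry the arguments evaluate to $\val(v_0)$ in both, whence $P'$'s parameters and globals receive $\val(v_0)$ and $\pi'$ is indeed the path taken; at return the receiving variable and globals hold $\val(v_n)$ in both. Thus $e$ is a complete execution of $P$, and every \textbf{assume} it meets holds: those inherited from $P$'s blocks because the values agree with the feasible $e^\#$, and those inside each replayed $P'$-execution because the corresponding $\pi'$ is feasible. Hence $e$ is a feasible execution of $P$.

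The main obstacle I expect is the single-static-assignment bookkeeping: one has to verify that the incarnation indices $v_0$ and $v_n$ occurring in $Sum$ are precisely the ones used by the framing statements at the call site — including the frame conditions inserted for globals the summary does not touch — and that the ``same values up to the call'' invariant really survives the gadget; everything else is a direct unfolding of definitions plus Theorems~\ref{thm:vc}, \ref{thm:alg1} and~\ref{thm:alg2}. A secondary point worth spelling out is that every $\val\in\tcs$ is a test case in the sense of the definition (so that $\val(\rvar_0)=\true$ and Theorem~\ref{thm:vc} applies), which amounts to reading the first \texttt{checksat} query as being made over $\vc$ together with $\rvar_0$.
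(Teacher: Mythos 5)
Your argument is correct: the paper actually states this lemma without a formal proof, offering only the preceding informal remark that a summary is a ``switch case over possible input values'' and hence an under-approximation, and your construction is precisely the rigorous version of that intuition — replay the caller's path and, at each summarized call, expand the satisfied disjunct of $Sum$ back into the feasible execution of $P'$ guaranteed by Theorems~\ref{thm:alg1}/\ref{thm:alg2} and Theorem~\ref{thm:vc}. You also correctly flag the two places where care is genuinely needed (the SSA incarnation bookkeeping for $v_0$/$v_n$ across the gadget, and the fact that elements of $\tcs$ must be read as test cases with $\val(\rvar_0)=\true$), so nothing essential is missing.
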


Using these summaries is a very strong abstraction as only a very
limited number of possible input values is considered as the set of feasible executions of the
called procedure is reduced to one per control-flow path (or even less, if
algorithm $\algoB$ is used). In particular, this causes problems if a procedure
is called with constant values as arguments. In the example above, inlining only
works if the theorem prover picks the same constant when computing the summary
that is used on the caller side (which is the case here). If the constants do
not match, the summary might provide no feasible path through the procedure,
which is still sound but not useful. In that case, a new summary has to be
computed where the constant values from the caller side are used as a
precondition for the procedure (e.g., by adding an appropriate assume statement
to the first block of the called procedure) before re-applying algorithm
$\algoA$ or $\algoB$.

The benefit of this summary computation is that it is fully automatic and the
computation of the summary is relatively cheap, because the called procedure has
to be analyzed at least once anyway. However, it is not a silver bullet and
its practical value has to be evaluated in our future work. We do not consider
procedure summaries as an efficient optimization. They rather are a necessary
abstraction to keep our method scalable.
  
\section{Experiments}\label{sec:experiments}

We have implemented a prototype of the presented algorithms. As this prototype
still is in a very early stage of development, the goal of this experiments is
only to evaluate the computation time of the queries needed to cover all
feasible statements in comparison to similar approaches. Other experiments, such
as the applicability to real world software remain part of future work.

We compare the algorithms from Section~\ref{sec:algorithm} with two other
approaches that compute a covering set of feasible executions: A worst-case
optimal approach $\algoFM$ from \cite{Hoenicke:2009:DWP:1693345.1693374} and a
query-optimal approach $\algoVSTTE$ from \cite{vstte12}. The worst-case optimal
approach checks if there exists a feasible control-flow path passing each
\emph{minimal} block. A block is minimal, if there exists no block that is
passed by a strict subset of the executions passing this
block~\cite{doomedjournal,Bertolino:1993:UEA:156928.156932}. Each implementation
uses helper variables to build queries that ask the theorem prover for the
existence of a path passing through one particular block. The query-optimal
approach~\cite{vstte12} uses helper variables to count how many minimal elements
occur on one feasible execution and then applies a greedy strategy to cover as
many minimal elements as possible with one valuation of the formula. Note that
the purpose of $\algoFM$ and $\algoVSTTE$ is slightly different from the purpose
of the algorithms in this paper. The $\algoFM$ and $\algoVSTTE$ use a loop-free
abstraction of the input program that over-approximates the set of feasible
executions of the original program (see, \cite{doomedjournal}). On this
abstraction they prove the existence of blocks which cannot be part of any
terminating execution. To be comparable, we use the same abstraction for
all algorithms. That is, we use $\algoFM$ and $\algoVSTTE$ to check the
existence of statements that do not occur on feasible executions. Since both
algorithms are complete for loop-free programs, they return the same result as
$\algoB$.

Note that the result and purpose of all algorithms is slightly different.
However, all of them use a theorem prover as an oracle to identify executions
that cover all feasible statements in a program.

For now, our implementation works only for the simple language from
Section~\ref{sec:preliminaries}. An implementation for a high-level language is
not yet available. Hence the purpose of the experiments is only to measure the
efficiency of the queries. Therefore, we decide to use randomly generated
programs as input data. Generated programs have several benefits. We can control
the size and shape of the program, we can generate an arbitrary number of
different programs that share some property (e.g., number of control-flow
diamonds), and they often have lots of infeasible control-flow paths. We are
aware that randomly generated input is a controversial issue when evaluating
research results, but we believe that, as we want to evaluate the performance of
the algorithms, and not their detection rate or practical use, they are a good
choice. A more technical discussion on this issue follows in the threats to
validity.

\paragraph{Experimental Setup.}

As experimental data, we use 80 randomly generate unstructured programs. Each
program has between 2 and 9 control-flow diamond shapes, and each diamond shape
has 2 levels of nested if-then-else blocks (i.e., there are 4 distinct paths
through each diamond). A block has 3 statements, which are either assignments of
(linear arithmetic) expressions to unbounded integer variables or assumptions
guarding the conditional choice. Each program has between 90 and 350 lines of
code and modifies between 10 and 20 different variables.

For each number of control-flow diamonds, we generated 10 different random
programs and computed the average run-time of the algorithms. This is necessary to get an
estimate of the performance of each algorithm, as their computation time
strongly depends on the overall number of feasible executions in the analyzed
program.

For a fair comparison, we use the theorem prover
SMTInterpol~\footnote{\url{http://ultimate.informatik.uni-freiburg.de/smtinterpol/}}
in all four algorithms. For each algorithm, we record how often the theorem
prover is asked to check the satisfiability of a formula and we record the time
it takes until the theorem prover returns with a result. All experiments are
carried out on a standard desktop computer with ample amount of memory.

\begin{table}[t]
\begin{center}
\begin{tabular}{ |l|r|r| }
\hline 
	Algorithm & Queries (total) & Time (sec) \\
  \hline
\algoA   & 69777 & 13.12 \\  
\algoB & 854 & 11.81 \\
\algoFM  & 1760 & 145,45 \\
\algoVSTTE  & 512 & 615,51 \\
\hline
\end{tabular}
\end{center}
\caption{Comparison of the four algorithms in terms of total number of queries
and computation time for 80 benchmark programs.}
\label{tbl:results}
\captionspace
\end{table}

\paragraph{Discussion.} Table~\ref{tbl:results} shows the summary of the results
for all algorithms after analyzing 80 benchmark programs.
Figure~\ref{fig:time} gives a more detailed view on the computation time per
program. The x-axis scales over the number of control-flow diamonds ranging from
2 diamonds to 9. Figure~\ref{fig:queries} gives a detailed view on the number
of queries. As before, the x-axis scales over the number of control-flow
diamonds.

The algorithms $\algoA$ and $\algoB$ are clearly faster than $\algoFM$ and
$\algoVSTTE$. Overall, $\algoB$ tends to be the fastest one.
Figure~\ref{fig:compare} shows the computation time for $\algoA$ and $\algoB$ in
a higher resolution. \begin{figure}[h] \centering 
\epsfig{file=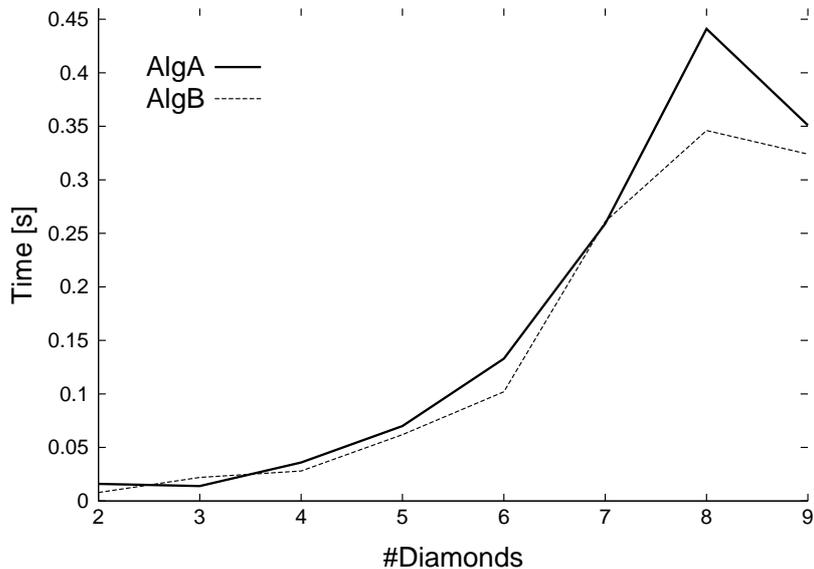,width=0.90\linewidth,clip=}
\caption{Runtime comparison of the algorithms proposed in Section~\ref{sec:algorithm}. The ticks on the x-axis represent
the number of control-flow diamonds in the randomly generated programs.
\label{fig:compare}}
\captionspace
\end{figure}  
It turns out that the difference between the computation time of $\algoA$ and
$\algoB$ tends to become bigger for larger programs. As expected, $\algoB$ works
a bit more efficient as the size of the formula is always bounded, while
$\algoA$ asserts one new term for every counterexample found. However, comparing
the number of theorem prover calls, there is a huge difference between $\algoA$
and $\algoB$. While $\algoB$ never exceeds a total of 20 queries per
program, $\algoA$ skyrockets already for small programs. For a program with 10
control-flow diamonds, $\algoA$ uses more than 2000 theorem prover calls, where
$\algoB$ only need 10. Still, $\algoB$ is only $0.03$ seconds faster on this
example ($<10\%$). 

\begin{figure}[h] \centering
\epsfig{file=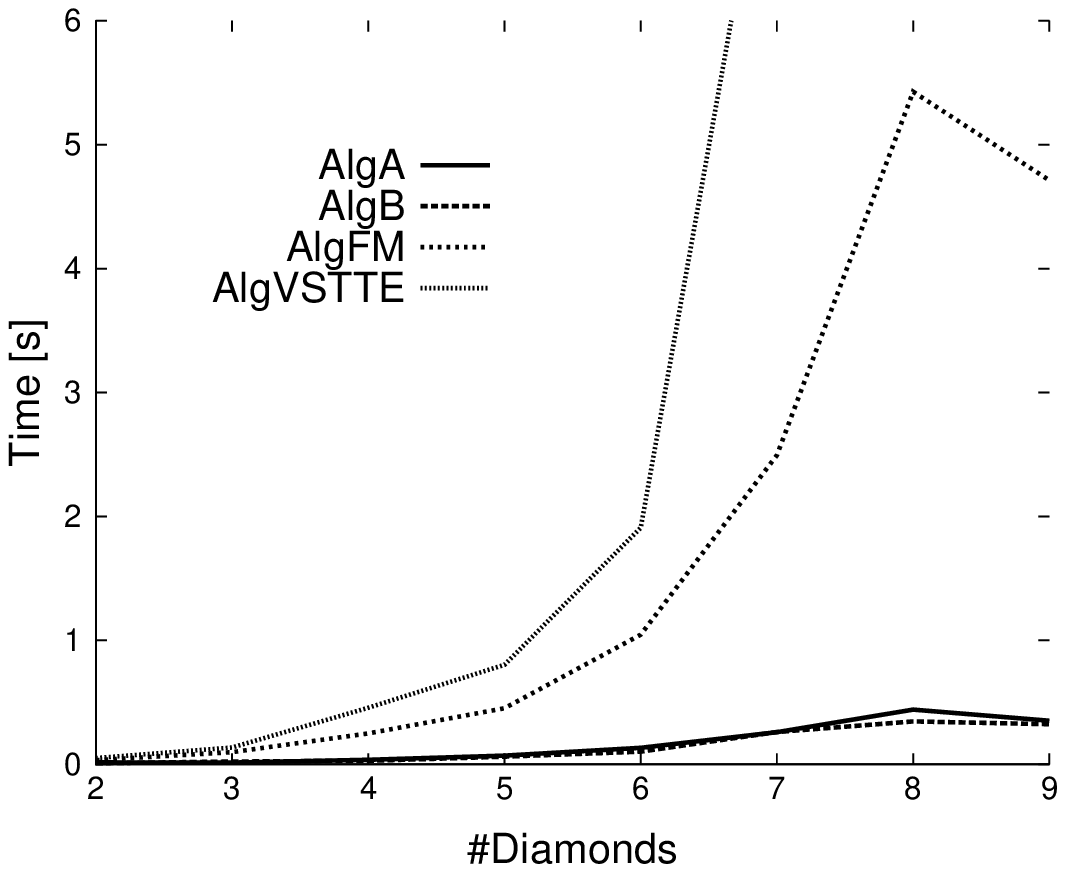,width=.95\linewidth,clip=}
\caption{Computation time for each algorithm. The ticks on the x-axis represent
the number of control-flow diamonds in the randomly generated programs.
\label{fig:time}}
\captionspace
\end{figure}  

\begin{figure}[h]
\centering
\epsfig{file=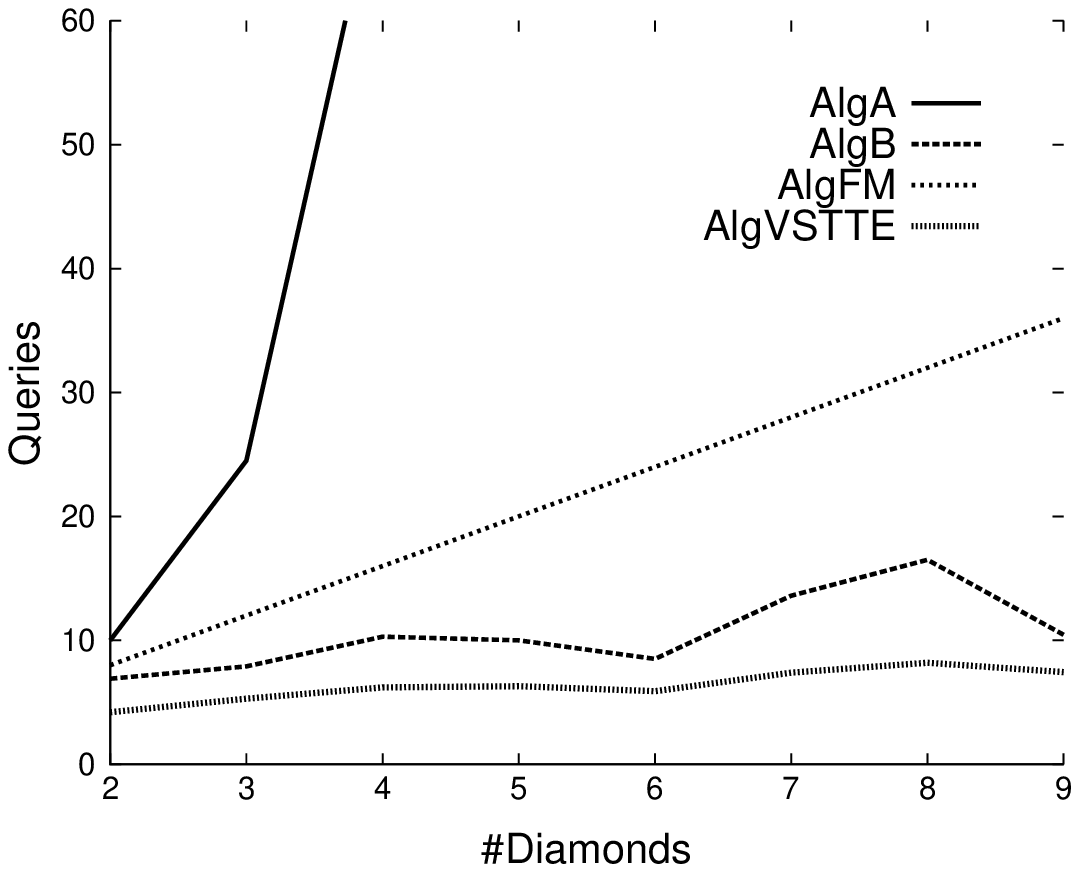,width=.95\linewidth,clip=}
\caption{Number of call to the theorem prover for each algorithm. The ticks on
the x-axis represent the number of control-flow diamonds in the randomly generated programs. \label{fig:queries}}
\captionspace
\end{figure} 

These results show that, even though $\algoB$ might be slightly more efficient
than $\algoA$, the number of queries is not an important factor for the
computation time. In fact, internally, the theorem prover tries to find a new
counterexample by changing as few variables as possible which is very close to
the idea of $\algoA$. $\algoB$, which queries if there exists a counterexample
through a block that has not been visited so far, will internally perform the
same steps as $\algoA$ and thus, the performance gain is only rooted in the
smaller formulas and reduced communication between application and prover.
However, the results also show that, when using a theorem prover, computing a
path cover with $\algoA$ is not significantly more expensive than computing
only a statement cover with $\algoB$.

The computation time for $\algoFM$ and $\algoVSTTE$ are significantly higher
than the one for the presented algorithms. For $\algoFM$, some queries,
and thus some computation time, could be saved by utilizing the counterexamples
to avoid redundant queries. However, the number of queries cannot become better
than the one of $\algoB$ due to the kind of queries. The most significant
benefit of $\algoA$ and $\algoB$ over $\algoFM$ is that they don't have to
inject helper variables in the program. In fact $\algoA$ and $\algoB$ also use
one variable per block to encode the reachability, but this variable is added to
the formula and not to the program. Thus, it is not considered during single
static assignment, which would create multiple copies for each variable.

For the query-optimal algorithm $\algoVSTTE$, the computation time becomes
extremely large for our random programs. This is due to the fact that
$\algoVSTTE$ tries to find the best possible counterexample (that is, the one
with the most previously uncovered blocks) with each query. Internally, the
theorem prover will exercise several counterexamples and discard them
until the best one is found. The procedure is similar to the one used in
$\algoA$ and $\algoB$: the theorem prover computes a counterexample and then
assures that this example cannot be found again, and then starts over. But in
contrast to $\algoVSTTE$, our algorithms do not force the theorem prover to
find a path that satisfies additional constraints, and, hence, relaxing the
problem that has to be solved by the theorem prover.
Even though one might find benchmarks where
$\algoVSTTE$ is significantly faster than $\algoFM$, the algorithms $\algoA$ and
$\algoB$ will always be more efficient since they pose easier (and, hence,
faster) queries to the theorem prover.

The presented results should not be interpreted as an argument against a
query-optimal algorithm. We rather conclude that the place for such
optimizations is inside the theorem prover. Modifying the way, the theorem
prover finds a new counterexample can lead to tremendous performance
improvements. However, such changes have to consider the structure of
verification conditions and thus will exceed the functionality of a general
theorem prover.

\paragraph{Threats to validity.} 
We emphasize that the purpose of the experiments is only to evaluate
the performance of $\algoA$ and $\algoB$. These experiments are not valid to
reason about practical use or scalability of the method.

We report several internal threats to validity:
The experiments only used a very restricted background theory.
However, the path reasoning described in this paper prunes the search space for the theorem prover
even if we use richer logics including arrays or quantifiers. As shown in our
experiments, the algorithms proposed in this paper pose easier problems to a
theorem prover. This won't change if we switch to richer logics since our
algorithms only limit the theorem prover to reason about feasible paths while all
other algorithms pose additional constraints on such a path. If we use richer
logics we only limit the number of paths. But still it remains easier to
just find a path than to find one that satisfies some
additional condition.

We have chosen randomly generated programs as input for two reasons. First, we
wanted to be able to scale the number of paths and use the most difficult shape
of the control structure for our techniques. Hence, we had to scale the number
of diamonds in the control flow graph. Second, we did not implement a parser for
a specific language. Existing translations from high-level languages into
unstructured languages are not suitable for our algorithms as they
over-approximate the set of infeasible executions to retain soundness w.r.t.
partial correctness proofs. These translations might both over- and
under-approximate the set of feasible executions of a program and thus violate
our notion of soundness. However, for the purpose of comparing the performance
of the different algorithms, the experiments are still valid.

In our experiments we only used SMTinterpol to answer the queries. For the
comparison of $\algoA$ with the other algorithms, the choice of the
theorem prover can make a significant difference. SMTinterpol tries to find a
valuation for a formula by making as few changes as possible to the previous
valuation. If a theorem prover chooses a different strategy, in particular
$\algoB$ might become much fast. However, we are not aware of any theorem prover
that uses this kind of strategy.

\section{Related Work}\label{sec:related}
Automatic test case generation is a wide field ranging from purely random
generation of input values (e.g., \cite{PachecoLET2007}) to complex static
analysis. The presented algorithms can best be compared to tools that provide
automatic white-box test case generation. Probably the most notable tools in
this field are PREfix~\cite{Bush:2000:SAF:348422.348428} and
Pex~\cite{DBLP:conf/sigsoft/TillmannS05a}. Both algorithms use symbolic
execution to generate test cases that provoke a particular behavior. Pex further
allows the specification of parameterized unit tests. Symbolic execution analyzes
a program path-by-path and then uses constraint solving to identify adequate
input to execute this path. In contrast, our approach encodes all paths into one
first-order formula and then calls a theorem prover to return any path and the
input values needed to execute this path. In a way, symbolic execution selects
a path and then searches feasible input values for this path, while our approach
just asks the theorem prover for \emph{any} path which is feasible. One
advantage of our approach is that it might be more efficient to ask the theorem
prover for a feasible path than checking for each path if it is feasible.

Many other approaches to static analysis-based automatic test case generation
and bounded model checking exist but, due to the early stage of the development
of the proposed ideas, a detailed comparison is subject to future work.

In \cite{Engel07generatingunit} test cases are generated from interactive
correctness proofs. The approach of using techniques from verification to
identify feasible control-flow paths for test case generation is similar to
ours. However, they generate test cases from a correctness proof, which
might contain an over-approximation of the feasible executions. This can
result in non-executable test cases. Our approach under-approximates the
set of feasible executions and thus, any of the generated test cases can be
executed.

Using a first-order formula representation of a program and a theorem prover to
identify particular paths in that program goes back to, e.g.,
ESC~\cite{Flanagan:2002:ESC:543552.512558} and, more recently,
Boogie~\cite{Leino:2005:EWP:1066417.1066421,Flanagan:2001:AEE:360204.360220,Barnett06boogie}.
These approaches use similar program transformation steps to generate the
formula representation of a program. However, the purpose of these approaches is
to show the absence of failing executions. Therefore, their formula represents
an approximation of the weakest precondition of the program with postcondition
$\true$. In contrast, we use the negated $\wlp$ with postcondition $\false$.
Showing the absence of failing executions is a more complicated task and
requires a user-provided specification of the intended behavior of the program.

In \cite{Grigore:2009:SPU:1557898.1557904}, Grigore et al propose to use the
strongest postcondition instead of the weakest precondition. This would also be
possible for our approach. As mentioned in Section~\ref{sec:vc}, the
reachability variables are used to avoid encoding the complete strongest
postcondition. However, it would be possible to use $sp$ and modify the
reachability variables to encode $\wlp$.

Recently there has been some research on $\wlp$ based program analysis: in
\cite{1292319}, an algorithm to detect unreachable code is presented. This
algorithm can be seen as a variation of $\algoB$. However, it does not return
test cases. The algorithms
$\algoFM$~\cite{Hoenicke:2009:DWP:1693345.1693374,doomedjournal}, and
$\algoVSTTE$~\cite{vstte12} detect code which never occurs on feasible
executions. While $\algoFM$ detects \emph{doomed program points}, i.e.
control-locations, $\algoVSTTE$ detects statements, i.e. edges in the CFG.
If a piece of code cannot be proved doomed/infeasible, a counter example is
obtained which represents a normal-terminating executions. The main difference
to our approach is that their formula is satisfied by all executions that either
block \emph{or fail}. We do not consider that an execution might fail and leave
this to the execution of the test case.

There are several strategies to cover control-flow graphs. The most related to
this work is \cite{vstte12}, which has already been explained above. Other
algorithms such as,
\cite{Bertolino:1994:AGP:203102.203103,Bertolino:1993:UEA:156928.156932,Forgacs:1997:FTP:267896.267922}
present strategies to compute feasible path covers efficiently. These algorithms
use dynamic analysis and are therefore not complete.

Lahiri et al~\cite{DBLP:conf/cav/LahiriNO06} used a procedure similar to one
of our proposed algorithms to efficiently compute predicate abstraction.
They used an AllSMT loop over a set of \textsl{important} predicates.
One of our algorithms, $\algoA$, lifts this idea to the context of test case generation
and path coverage.
Our second algorithm, $\algoB$ cannot be used in their context since the authors of this
paper need to get all satisfying assignments for the set of predicates.
In contrast, we are only interested in the set of predicates that are
satisfied in at least one model of the SMT solver.

\section{Conclusion}\label{sec:conclusion}

We have presented two algorithms to compute test cases that cover those
statements respectively control-flow paths which have feasible executions within
a certain number of loop unwindings. The algorithms compute a set of test cases in a 
fully automatic way without requiring any user-provided information about the
environment of the program. The algorithms guarantee
that these executions also exist in the original program (with loops). We
further have presented a fully automatic way to compute procedure summaries,
which gives our algorithm the potential to scale even to larger programs.

If no procedure summaries are used, the presented algorithms  cover \emph{all}
statements/paths with feasible executions within the selected number of
unwindings. That is, besides returning test cases for the feasible
statements/paths one major result is that all statements that are not covered
cannot be covered by \emph{any} execution and thus are dead code.

The experiments show that the preliminary implementation already is able to
outperform existing approaches that perform similar tasks. The experiments also
show that computing a feasible path cover is almost as efficient as computing a
feasible statement cover with the used oracle even for procedures of up to 300
lines of code.

Due to the early stage of development there are still some limitations which
refrain us from reporting a practical use of the proposed algorithms. So far, we
do not have a proper translation from high level programming language into our
intermediate format. Current translations into unstructured intermediate
verification languages such as Boogie~\cite{Barnett06boogie} are built to preserve all
failing executions of a program for the purpose of proving partial correctness.
However, these translations add feasible executions to the program during
translation which breaks our notion of soundness. Further, our language does not
support assertions. Runtime errors are guarded using conditional choice to give
the test case generation the possibility to generate test cases that provoke
runtime errors. A reasonable translation which only under-approximates feasible
executions is still part of our future research.

Another problem is our oracle. Theorem provers are limited in their ability to
find satisfying valuations for verification conditions. If the program contains,
e.g., non-linear arithmetic, a theorem prover will not be able to find a
valuation in every case. This does not affect the soundness of our approach, but
it will prevent the algorithm from covering all feasible paths (i.e., the approach is
not complete anymore). To make these algorithms applicable to real world
programs, a combination with dynamic analysis might be required to identify
feasible executions for those parts where the code is not available, or where
the theorem prover is inconclusive.

\paragraph{Future Work.} Our future work encompasses the development of a proper
translation from Java into our unstructured language. This step is essential to
evaluate the practical use of the proposed method and to extend its use to other
applications.

One problem when analyzing real programs is intellectual property boundaries and
the availability of code of third-party libraries. We plan to develop a
combination of this approach with random testing (e.g.,
\cite{DBLP:conf/oopsla/PachecoE07}), where random testing is used to compute
procedure summaries for library procedure(s) where we cannot access the code.

The proposed procedure summaries have to be recomputed 
if the available summaries for a procedure do not represent any feasible
execution in the current calling context. 
Therefore, we plan to develop a refinement loop
which stores summaries more efficiently.

Another application would be to change the reachability variables in a way that
they are only true if an assertion inside a block fails rather than if the block
is reached. This would allow us to identify \emph{all} paths that violate
assertions in the loop-free program. Encoding failing assertions this way can be
seen as an extension of the work of Leino et al in
\cite{Leino:2005:GET:1065095.1065103}.

In the theorem prover, further optimizations could be made to improve the
performance of $\algoB$. Implementing a strategy to find new valuations that,
e.g., change as many reachability variables as possible from the last valuation
could lead to a much faster computation of a feasible statement cover. In the
future, we plan to implement a variation of the algorithm
$\algoVSTTE$~\cite{vstte12} inside the theorem prover.

We believe that the presented method can be a powerful extension to dynamic
program analysis by providing information about which parts of a program can be
executed within the given unwinding, what valuation is needed to execute them,
and which parts can never be executed. The major benefit of this kind of program
analysis is that it is \emph{user friendly} in a way that it does not require
any input besides the program and that any output refers to a real execution in
the program. That is, it can be used without any extra work and without any
expert knowledge. However, more work is required to find practical evidence for
the usefulness of the presented ideas.

\paragraph{\textbf{Acknowledgements.}} This work is supported
  by the projects ARV and COLAB funded by Macau Science and Technology
  Development Fund.

\bibliographystyle{abbrv}
\bibliography{main}

\begin{thebibliography}{10}

\bibitem{Barnett06boogie}
M.~Barnett, B.-Y.~E. Chang, R.~DeLine, B.~Jacobs, and K.~R.~M. Leino.
\newblock {B}oogie: A modular reusable verifier for object-oriented programs.
\newblock In {\em FMCO}, 2006.

\bibitem{barnett2005}
M.~Barnett and K.~R.~M. Leino.
\newblock Weakest-precondition of unstructured programs.
\newblock {\em SIGSOFT Softw. Eng. Notes}, 31:82--87, September 2005.

\bibitem{vstte12}
C.~Bertolini, M.~Sch{\"a}f, and P.~Schweitzer.
\newblock Infeasible code detection.
\newblock In {\em VSTTE}, 2012.

\bibitem{Bertolino:1993:UEA:156928.156932}
A.~Bertolino.
\newblock Unconstrained edges and their application to branch analysis and
  testing of programs.
\newblock {\em J. Syst. Softw.}, 20:125--133, February 1993.

\bibitem{Bertolino:1994:AGP:203102.203103}
A.~Bertolino and M.~Marr\'{e}.
\newblock Automatic generation of path covers based on the control flow
  analysis of computer programs.
\newblock {\em IEEE Trans. Softw. Eng.}, 20:885--899, 1994.

\bibitem{Bush:2000:SAF:348422.348428}
W.~R. Bush, J.~D. Pincus, and D.~J. Sielaff.
\newblock A static analyzer for finding dynamic programming errors.
\newblock {\em Softw. Pract. Exper.}, 30:775--802, June 2000.

\bibitem{Cohen:2009:VPS:1616077.1616080}
E.~Cohen, M.~Dahlweid, M.~Hillebrand, D.~Leinenbach, M.~Moskal, T.~Santen,
  W.~Schulte, and S.~Tobies.
\newblock {VCC}: A practical system for verifying concurrent {C}.
\newblock In {\em TPHOLs}, pages 23--42, 2009.

\bibitem{Cytron:1991:ECS:115372.115320}
R.~Cytron, J.~Ferrante, B.~K. Rosen, M.~N. Wegman, and F.~K. Zadeck.
\newblock Efficiently computing static single assignment form and the control
  dependence graph.
\newblock {\em ACM Trans. Program. Lang. Syst.}, 13:451--490, October 1991.

\bibitem{nla.cat-vn2681671}
E.~W. Dijkstra.
\newblock {\em A~discipline~of~programming~/~Edsger~W. Dijkstra}.
\newblock Prentice-Hall, Englewood Cliffs, N.J., 1976.

\bibitem{Engel07generatingunit}
C.~Engel and R.~H{\"a}hnle.
\newblock Generating unit tests from formal proofs.
\newblock In {\em Proceedings, Testing and Proofs}, 2007.

\bibitem{Flanagan:2002:ESC:543552.512558}
C.~Flanagan, K.~R.~M. Leino, M.~Lillibridge, G.~Nelson, J.~B. Saxe, and
  R.~Stata.
\newblock Extended static checking for java.
\newblock {\em SIGPLAN Not.}, 37:234--245, May 2002.

\bibitem{Flanagan:2001:AEE:360204.360220}
C.~Flanagan and J.~B. Saxe.
\newblock Avoiding exponential explosion: generating compact verification
  conditions.
\newblock In {\em POPL}, pages 193--205, 2001.

\bibitem{Forgacs:1997:FTP:267896.267922}
I.~Forg\'{a}cs and A.~Bertolino.
\newblock Feasible test path selection by principal slicing.
\newblock {\em SIGSOFT Softw. Eng. Notes}, 22:378--394, 1997.

\bibitem{Grigore:2009:SPU:1557898.1557904}
R.~Grigore, J.~Charles, F.~Fairmichael, and J.~Kiniry.
\newblock Strongest postcondition of unstructured programs.
\newblock In {\em FTfJP}, pages 6:1--6:7, 2009.

\bibitem{doomedjournal}
J.~Hoenicke, K.~Leino, A.~Podelski, M.~Sch\"{a}f, and T.~Wies.
\newblock Doomed program points.
\newblock {\em Formal Methods in System Design}, 37:171--199, 2010.

\bibitem{Hoenicke:2009:DWP:1693345.1693374}
J.~Hoenicke, K.~R. Leino, A.~Podelski, M.~Sch\"{a}f, and T.~Wies.
\newblock It's doomed; we can prove it.
\newblock In {\em FM '09}, pages 338--353, Berlin, Heidelberg, 2009.
  Springer-Verlag.

\bibitem{1292319}
M.~Janota, R.~Grigore, and M.~Moskal.
\newblock Reachability analysis for annotated code.
\newblock In {\em SAVCBS'07}, pages 23--30, 2007.

\bibitem{DBLP:conf/cav/LahiriNO06}
S.~K. Lahiri, R.~Nieuwenhuis, and A.~Oliveras.
\newblock Smt techniques for fast predicate abstraction.
\newblock In {\em CAV}, pages 424--437, 2006.

\bibitem{Leino:2005:EWP:1066417.1066421}
K.~R.~M. Leino.
\newblock Efficient weakest preconditions.
\newblock {\em Inf. Process. Lett.}, 93:281--288, March 2005.

\bibitem{Leino:2005:GET:1065095.1065103}
K.~R.~M. Leino, T.~Millstein, and J.~B. Saxe.
\newblock Generating error traces from verification-condition counterexamples.
\newblock {\em Sci. Comput. Program.}, 55:209--226, March 2005.

\bibitem{DBLP:conf/oopsla/PachecoE07}
C.~Pacheco and M.~D. Ernst.
\newblock Randoop: feedback-directed random testing for java.
\newblock In {\em OOPSLA Companion}, pages 815--816, 2007.

\bibitem{PachecoLET2007}
C.~Pacheco, S.~K. Lahiri, M.~D. Ernst, and T.~Ball.
\newblock Feedback-directed random test generation.
\newblock In {\em ICSE '07: Proceedings of the 29th International Conference on
  Software Engineering}, 2007.

\bibitem{DBLP:conf/sigsoft/TillmannS05a}
N.~Tillmann and W.~Schulte.
\newblock Parameterized unit tests.
\newblock In {\em ESEC/SIGSOFT FSE}, pages 253--262, 2005.

\end{thebibliography}
%\clearpage
%\input{appendix}

\end{document}